
\documentclass[a4paper,oneside,11pt]{article}

\usepackage{amsmath,amsfonts,amscd,amssymb}
\usepackage{longtable,geometry}
\usepackage[english]{babel}
\usepackage[utf8]{inputenc}
\usepackage[active]{srcltx}
\usepackage[T1]{fontenc}
\usepackage{graphicx}
\usepackage{pstricks}
\usepackage{bbm}

\usepackage{MnSymbol}
\usepackage{stmaryrd}
\usepackage{nicefrac}
\usepackage{calrsfs}

 \usepackage{rotating}

\usepackage{xcolor}
\usepackage{framed}

\colorlet{shadecolor}{blue!15}

\geometry{dvips,a4paper,margin=1.2in}

\newtheorem{theorem}{Theorem}[section]

\newtheorem{corollary}[theorem]{Corollary}
\newtheorem{lemma}[theorem]{Lemma}
\newtheorem{proposition}[theorem]{Proposition}
\newtheorem{definition}[theorem]{Definition}

\newtheorem{remark}[theorem]{Remark}


\newenvironment{proof}[1][\relax]
  {\paragraph{Proof\ifx#1\relax\else~of #1\fi}}%
  {~\hfill$\square$\par\bigskip}

\newcommand{\ee}{\end{equation}}
\newcommand{\be}{\begin{equation}}

\title{Random currents expansion of the Ising model}

\author{Hugo Duminil-Copin}

\begin{document}

\newcommand{\lr}[1][]{\stackrel{#1}\longleftrightarrow}

\newcommand{\nlr}[1][]{\overset{#1}{\not\longleftrightarrow}}

\maketitle

\begin{abstract}
Critical behavior at an order/disorder phase transition has been a central object of interest in statistical physics. In the past century, techniques borrowed from many different fields of mathematics (Algebra, Combinatorics, Probability, Complex Analysis, Spectral Theory, etc) have contributed to a more and more elaborate description of the possible critical behaviors for a large variety of models. The Ising model is maybe one of the most striking success of this cross-fertilization, for this model of ferromagnetism is now very well understood both physically and mathematically. In this article, we review an approach, initiated in \cite{GriHurShe70,Aiz82} and based on the notion of random currents, enabling a deep study of the model. 
\end{abstract}

\vspace{-2mm}
\section{The Ising model}

\subsection{Ising model on a finite graph} In the Ising model, a magnetic material is described as a collection of small magnetic moments placed regularly on a lattice. The magnetic property of the material is assumed to be highly anisotropic, in the sense that the  magnetic moments can point only in two opposite directions (which are usually called $\pm1$). The aim of the model is to explain how interactions between neighboring magnetic moments can create (or not) a global magnetization of the material under the application of an exterior magnetic field. We will mostly consider the case of ferromagnetic interactions, in which the interaction between neighboring magnetic moments pushes them to align (or equivalently to be equal).

Formally, the model is defined on a finite set $\Lambda$ as follows. A {\em spin} variable $\sigma_x\in\{\pm1\}$ is attributed to each $x\in \Lambda$. The
 {\em spin configuration} $\sigma=(\sigma_x:x\in \Lambda)\in\{\pm1\}^{\Lambda}$ is given by the collection of all the spins.
Let $\mathcal E=\mathcal E(\Lambda)$ be the set of unordered pairs $\{x,y\}$ of elements in $\Lambda$ with $x\ne y$. Below, we denote an element of $\mathcal E$ by $xy$. For a family $(J_{xy})_{xy\in\mathcal E}$ of {\em coupling constants} $J_{xy}\ge0$ and $h\in\mathbb R$, introduce the energy of a spin configuration $\sigma$ defined by
$$\mathbf H_{\Lambda,h}(\sigma):=-\sum_{xy\in\mathcal E}J_{xy}\,\sigma_x\sigma_y-h\sum_{x\in \Lambda}\sigma_x.$$
For $\beta\ge0$ and $f:\{\pm1\}^{\Lambda}\longrightarrow \mathbb  R$, introduce
\be\label{eq:4}Z_{\Lambda,\beta,h}(f):=\sum_{\sigma\in\{\pm1\}^{\Lambda}}f(\sigma)\exp[-\beta\mathbf H_{\Lambda,h}(\sigma)].\ee
\begin{definition}The Ising measure $\langle \cdot\rangle_{\Lambda,\beta,h}$ with coupling constants $(J_{xy})$ on $\Lambda$ at inverse temperature $\beta\ge0$ and external field $h\in\mathbb R$ is defined by the formula
\be\label{eq:Gibbs}\langle f\rangle_{\Lambda,\beta,h}:=\frac{Z_{\Lambda,\beta,h}(f)}{Z_{\Lambda,\beta,h}(1)}\qquad\qquad\text{for every $f:\{\pm1\}^{\Lambda}\longrightarrow \mathbb  R$.}\ee
\end{definition}
One often defines the Ising model on a graph $G$ with vertex-set $\Lambda$. In this context, if the coupling constants are defined in such a way that $J_{xy}=1$ if $xy$ is an edge of $G$ and 0 otherwise,
we speak of the  {\em nearest-neighbor ferromagnetic} (n.n.f.) Ising model on $G$. 

\subsection{Ising model on an infinite graph}The Ising model efficiently describes the phase transition\footnote{Pierre Curie discovered a transition between the paramagnetic (i.e.~the ability of a material to gain a magnetization when immersed in a magnetic field) and the ferromagnetic (i.e.~the ability of a material to keep this magnetization when the magnetic field is removed) behaviors of more that twenty real-life materials in his thesis in 1895. He discovered Curie's law for paramagnets at the same time.}  at Curie's temperature between the paramagnetic and the ferromagnetic properties of a material. 
In order to witness the emergence of a phase transition, we consider the model on infinite sets. For simplicity, we will focus on the Ising model at inverse-temperature $\beta$ on
$$\Lambda=\mathbb  Z^d:=\big\{(x_1,\dots,x_d):x_i\in \mathbb  Z\text{ for all }1\le i\le d\big\}$$
and assume that the coupling constants $J_{xy}\ge0$ depend only on $x-y$. In such case, we speak of a model which is {\em ferromagnetic} and {\em invariant (under translations)}. 

One cannot directly define the Ising model on $\mathbb Z^d$ by the same formulae as in the previous paragraph since the energy would involve a divergent series. Hence, we are bound to define the measure as the limit of measures on finite sets. One possible procedure is the following. For $n\ge1$, let $\Lambda_n$ denote the set $\{-n,\dots,n\}^d$. Then, $\langle\cdot\rangle_{\Lambda_n,\beta,h}$ can be proved to converge weakly (as $n\rightarrow\infty$) to a probability measure $\langle\cdot\rangle_{\beta,h}$ on $\{\pm1\}^{\mathbb Z^d}$ equipped with the $\sigma$-algebra $\mathcal F$ generated by the random variables $\sigma\mapsto \sigma_x$ for every $x\in \mathbb Z^d$.

Once infinite-volume measures $\langle\cdot\rangle_{\beta,h}$ are defined, we may introduce an order parameter measuring the magnetization of the material and speak of a phase transition.

\begin{definition} The {\em spontaneous magnetization} $m^*(\beta)=m^*(\beta,d)$ is the limit as $h\searrow 0$ of $m(\beta,h):=\langle \sigma_0\rangle_{\beta,h}$. The {\em critical inverse-temperature} of the model is defined as
\be\beta_c=\beta_c(d):=\inf\{\beta\ge0:m^*(\beta)>0\}.\ee
\end{definition}
The quantity $1/\beta_c$ should be interpreted as Curie's temperature. It separates a regime without spontaneous magnetization ($m^*(\beta)=0$) corresponding to a paramagnet from a regime with spontaneous magnetization ($m^*(\beta)>0$) corresponding to a ferromagnet. Physicists and mathematicians are then interested in the behavior of the model near $\beta_c$ describing the transition between the two regimes. 

\begin{remark}The Ising model goes back to Lenz \cite{Len20} who suggested it to his PhD student Ising. Ising  \cite{Isi25} proved that $\beta_c=\infty$ for the n.n.f. Ising model on $\mathbb Z$. Ising also conjectured that $\beta_c(d)$ is always equal to infinity and that the model is therefore unable to predict the existence of Curie's temperature. Because of this unfortunate prediction, the model was abandoned for some time before Peierls \cite{Pei36} finally contradicted Ising by proving that $\beta_c(d)\in(0,\infty)$ for any $d\ge 2$. \end{remark}

\subsection{Partition function of the Ising model} 

The quantity $Z_{\Lambda,\beta,h}(1)$ is called the {\em partition function} of the model (from now on, we will denote it by $Z_{\Lambda,\beta,h}$). The partition function of the model is directly connected to the {\em free energy} defined as
\be\nonumber
f(\beta,h):=\lim_{n\rightarrow \infty}\frac{1}{|\Lambda_n|}\log Z_{\Lambda_n,\beta,h}.
\ee
Some properties of the model can be obtained via the free energy -- for instance
$
\langle \sigma_0\rangle_{\beta,h}=\frac1{\beta}\frac{\partial}{\partial h}f(\beta,h)
$ --
and the existence of a phase transition is directly related to singularities (in $\beta$ and $h$) of $f(\beta,h)$. 
For these reasons, trying to compute the free energy of the model has been a central question in statistical physics. 

In \cite{Ons44}, Onsager built on works of Kramers and Wannier \cite{KraWan41} to show that the free energy of the n.n.f. Ising model with zero magnetic field in $\mathbb Z^2$ is given by
\be\nonumber
\beta f(\beta,0)=\ln (2)+\frac1{8\pi^2}\int_0^{2\pi}\int_0^{2\pi}d\theta_1d\theta_2 \ \ln\Big(\cosh(\beta)^2-\sinh(\beta)\big(\cos(\theta_1)+\cos(\theta_2)\big)\Big)
\ee
from which one infers that $\beta_c=\tfrac12\ln(1+\sqrt 2)$. Onsager's computation of the free energy is based on the study of the eigenvalues of the so-called transfer matrices. The original strategy used by Onsager is based on the fact that the transfer matrix is the product of two matrices whose commutation relations generate a finite dimensional Lie algebra. Later on, Kaufman \cite{Kau49} gave a simpler solution using Clifford algebra and anti-commuting spinor (free-fermion) operators.
Onsager also announced that
\be\nonumber
m^*(\beta)=\Big(1-\sinh(\beta)^{-4}\Big)^{1/8}
\ee
without providing a rigorous proof of the statement. The result was mathematically proved by Yang in \cite{Yan52} using a limiting process of transfer matrix eigenvalues. Later on, Onsager explained that he did not publish the proof because he was unable to justify certain statements regarding Toeplitz determinants.

Since Onsager's original computation of the free energy, many new approaches were proposed to compute the free energy. Yang and Baxter provided an alternative strategy based on the Yang-Baxter equation by greatly generalizing some of the ideas related to the star-triangle transformation introduced in Onsager's solution. The free energy of the model was also mapped to several other models. Kasteleyn \cite{Kas63} related the partition function of the Ising model  to the one of dimers, thus enabling him to study the n.n.f. Ising model on planar graphs. Kac and Ward \cite{KacWar52} provided an approach, referred to as the combinatorial approach, which expresses the free energy in terms of families of signed loops. Schultz, Mattis and Lieb mapped the Ising model to fermionic systems in \cite{SchMatLie64}. To illustrate the variety of solutions, Baxter and Enting \cite{BaxEnt78} provided yet another solution in a paper entitled ``399$^{th}$ solution to the 2D Ising model", which is based solely on the star-triangle transformation. 

The previous list of solutions of the n.n.f. Ising model in 2D is impressive, and it is fair to say that the model has been a laboratory for new techniques related to exact computations of partition functions for statistical physics models. 
However, for more general interactions or simply for the n.n.f. model in higher dimension, the approach based on an exact computation of the free energy seems much more challenging since the model is not currently known to be exactly solvable. 

Physicists and mathematicians therefore turned their attention to alternative approaches to handle the model. They started by studying expansions of the partition function. The most famous ones are called the low and high temperature expansions\footnote{The low temperature expansion enabled Peierls to show that the critical inverse-temperature of the n.n.f. Ising model on $\mathbb Z^d$ is strictly smaller than $\infty$ for $d\ge2$.}. An expansion in terms of subgraphs of the original graph, called the random-cluster model, was found by Fortuin and Kasteleyn in \cite{ForKas72}. 
Several random-walk expansions were also introduced \cite{BryFroSpe82,Sym66}. Last but not least, the so-called random currents expansion was developed in \cite{GriHurShe70,Aiz82}. 

The strength of all these expansions is that they work for all graphs. They do not lead to an explicit computation of the partition function or the free energy, but they provide new insight and often highlight specific properties of the model. For these reasons, their applications go beyond the original goal of expanding the partition function, since they enable the physicists and the mathematicians to prove new properties of the model. 

 The goal of this proceeding is to survey the results obtained via the random currents expansion. 
The article is organized as follows. We start by discussing a few expansions of the Ising model. Then, we focus on the random currents expansion and the fundamental ``switching lemma". In the fourth section we discuss some applications of random currents. Finally, the last section lists a few open questions related to random currents and the Ising model.

\section{Expansions of the Ising model correlations}

Fix $\beta,h\ge0$ and $A\subset \Lambda$ both finite. We set $\sigma_A=\prod_{x\in A}\sigma_x$. The goal of this section is to expand $Z_{\Lambda,\beta,h}(\sigma_A)$ as a sum of weighted objects and to deduce an expression for $\langle\sigma_A\rangle_{\Lambda,\beta,h}$ in terms of these objects. The objects under consideration will be either graphs, walks or integer-valued functions. In the two first cases, we speak of graphical and random-walk representations 
while in the last case, we speak of expansions in currents.

Before starting, let us make a small detour. The magnetic field $h\ge0$ can be seen as a global bias pushing spins towards $+1$. This magnetic field can be interpreted in a nice way  by introducing an addition point $\mathfrak g\notin \Lambda$ called {\em Griffiths' ghost vertex} and by setting $J_{x\mathfrak g}=J_{x\mathfrak g}(h):=h$. Then, 
$\langle \cdot\rangle_{\Lambda,\beta,h}=\langle\cdot_{|\Lambda} \,|\,\sigma_{\mathfrak g}=+1\rangle_{\Lambda\cup\{\mathfrak g\},\beta,0}$.
In other words, by adding one ``ghost" vertex, the magnetic field can be interpreted as an Ising model without magnetic field, conditioned on the spin of the ghost vertex to be $+1$. We will often use this interpretation with the ghost vertex for which we set $\sigma_{\mathfrak g}$ to be always $+1$ for obvious reasons.
From now on, $\mathcal E=\mathcal E(\Lambda\cup\{\mathfrak g\})$ is the set of unordered pairs $\{x,y\}\subset \Lambda\cup\{\mathfrak g\}$. 

\subsection{Expansion in integer-valued functions (or currents)}\label{sec:current} While the expansion in integer-valued functions is not the oldest nor the most elementary one, it is the one that will be studied in detail later in the text, and we therefore choose to present it first for full awareness. Let $\mathbb N:=\{0,1,2,\dots\}$ be the set of non-negative integers.
For $\mathbf n=(\mathbf n_{xy}:xy\in\mathcal E)\in \mathbb N^{\mathcal E}$ and $x\in \Lambda\cup\{\mathfrak g\}$, introduce $X(\mathbf n,x):=\sum_{y\in \Lambda\cup\{\mathfrak g\}}\mathbf n_{xy}$. We also set 
$\partial \mathbf n:=\{x\in \Lambda:X(\mathbf n,x)\text{ odd}\}
$ (note that $\mathfrak g$ is {\em never} in $\partial\mathbf n$).
As observed by Griffiths, Hurst and Sherman~\cite{GriHurShe70}, the identity
\be\label{eq:41}\exp[\beta J_{xy} \sigma_x\sigma_y]=\sum_{{\mathbf n}_{xy}=0}^\infty\frac{(\beta J_{xy}\sigma_x\sigma_y)^{{\mathbf n}_{xy}} }{{\mathbf n}_{xy}!}\ee
allows us to write
\be\label{eq:42}
Z_{\Lambda,\beta,h}(\sigma_A)=\sum_{\sigma\in\{\pm1\}^\Lambda}\sigma_A\sum_{{\mathbf n}\in \mathbb N^{\mathcal E}}\prod_{xy\in \mathcal E}\frac{(\beta J_{xy}\sigma_x\sigma_y)^{{\mathbf n}_{xy}} }{{\mathbf n}_{xy}!}.
\ee
By interchanging the two sums on the right-hand side, \eqref{eq:42} can be rewritten as
\be\label{eq:12}
Z_{\Lambda,\beta,h}(\sigma_A)=\sum_{{\mathbf n}\in \mathbb N^{\mathcal E}}w(\mathbf n)\sum_{\sigma\in\{\pm1\}^\Lambda}\prod_{x\in \Lambda}\sigma_x^{\,\mathbb{I}[x\in A]+X(\mathbf n,x)},
\ee
where 
\be\nonumber
w(\mathbf n)=w_{\Lambda,\beta,h}(\mathbf n):=\prod_{xy\in \mathcal E}\frac{\displaystyle(\beta J_{xy})^{{\mathbf n}_{xy}}}{{\mathbf n}_{xy}!}.
\ee
The trick comes next.
Fix $x\in \Lambda$ and $\sigma\in\{\pm1\}^\Lambda$. Define the configuration $\sigma^{(x)}$ obtained from $\sigma$ by reversing the spin at $x$. Since for a fixed $x\in\Lambda$, the map $\sigma\mapsto \sigma^{(x)}$ is an involution, and since the contributions of $\sigma^{(x)}$ and $\sigma$ to the sum over spin configurations in \eqref{eq:12} cancel each others as soon as $\mathbb{I}[x\in A]+X(\mathbf n,x)$ is odd, we find that
\be\label{eq:pm1trick}\sum_{\sigma\in\{\pm1\}^{\Lambda}}\prod_{x\in \Lambda}\sigma_x^{\,\mathbb{I}[x\in A]+X(\mathbf n,x)}=\begin{cases}\ 2^{|\Lambda|} &\text{ if $\partial \mathbf n=A$,}\\ \ \ 0&\text{ otherwise.}\end{cases}\ee
In conclusion,

 \be \label{eq:expansion current}
Z_{\Lambda,\beta,h}(\sigma_A)=2^{|\Lambda|}\sum_{{\mathbf n}\in\mathbb N^{\mathcal E}:\,\partial{\mathbf n}=A}w({\mathbf n})
.\ee
Since 
$
\langle\sigma_A\rangle_{\Lambda,\beta,h}=Z_{\Lambda,\beta,h}(\sigma_A)/Z_{\Lambda,\beta,h},$
we deduce that 
\be\label{eq:77}
\langle\sigma_A\rangle_{\Lambda,\beta,h}=\frac{\sum_{\mathbf n\in\mathbb N^{\mathcal E}:\partial \mathbf n=A}w(
\mathbf n)}{\sum_{\mathbf n\in\mathbb N^{\mathcal E}:\partial \mathbf n=\emptyset}w(
\mathbf n)}.\ee

\begin{remark}Equation \eqref{eq:77} implies the Griffiths' first inequality \cite{Gri67}: for any $\beta,h\ge0$ and $A\subset \Lambda$,
$\langle\sigma_A\rangle_{\Lambda,\beta,h}\ge0$.
\end{remark}

We conclude this section by introducing some terminology. An element of $\mathbb N^{\mathcal E}$ is called a {\em current}, and a vertex $x$ with $X(\mathbf n,x)$ odd a {\em source} of the current $\mathbf n$.
The name of current comes from the interpretation of the so-called backbone (see Section~2.3.2 below) of $\mathbf n$ as currents going from one source to another one.

\subsection{Graphical expansions of the Ising model} We now present three graphical representations, i.e. expansions in subsets of $\mathcal E$.
\medbreak
\noindent{\em 2.2.1 The high-temperature expansion.} For $E\subset \mathcal E$ and $x\in \Lambda$, set $\Delta(E,x)$ to be the number of pairs in $E$ containing $x$ and 
$\partial E:=\{x\in \Lambda:\Delta(E,x)\text{ odd}\}.
$
As observed by van der Waerden~\cite{vdW}, the identity
$$
\exp(\beta J_{xy}\sigma_x\sigma_y)=\cosh(\beta J_{xy})(1+\tanh(\beta J_{xy})\sigma_x\sigma_y)
$$
allows us to write 
\be
Z_{\Lambda,\beta,h}(\sigma_A)=c_0\ \sum_{\sigma\in\{\pm1\}^\Lambda}\sigma_A\sum_{E\subset \mathcal E}\prod_{xy\in E}\tanh(\beta J_{xy})\sigma_x\sigma_y,\ee
where 
$c_0=c_0(\Lambda,J):=\prod_{xy\in\mathcal E}\cosh(\beta J_{xy}).
$
Interchanging the two sums, we obtain
\be\nonumber
Z_{\Lambda,\beta,h}(\sigma_A)=c_0\sum_{E\subset \mathcal E}x(E)\sum_{\sigma\in\{\pm1\}^\Lambda}\prod_{x\in \Lambda}\sigma_x^{\,\mathbb{I}[x\in A]+\Delta(E,x)},
\ee
where
$x(E)=x_{\Lambda,\beta,h}(E):=\prod_{xy\in E}\tanh(\beta J_{xy}).$
The same use of the $\pm 1$ symmetry as in the previous section implies that \eqref{eq:pm1trick} is true with $E$ and $\Delta$ replacing $\mathbf n$ and $X$  so that 
\begin{align}\label{eq:high temperature expansion}
Z_{\Lambda,\beta,h}(\sigma_A)&=c_0\,2^{|\Lambda|}\sum_{E\subset\mathcal E:\partial E=A}x(E)
\end{align}
and
\be\label{eq:78}
\langle\sigma_A\rangle_{\Lambda,\beta,h}=\frac{\sum_{E\subset\mathcal E:\partial E=A}x(E)}{\sum_{E\subset\mathcal E:\partial E=\emptyset}x(E)}.\ee
This expansion is called the {\em high-temperature expansion} of the Ising model. \bigbreak
\noindent{\em 2.2.2 The Fortuin-Kasteleyn expansion.} Let us start with the partition function (i.e. $A=\emptyset$). If $p_{xy}:=1-\exp[-2\beta J_{xy}]$ for all $xy\in\mathcal E$, we may use the identity 
$\exp[\beta J_{xy}\sigma_x\sigma_y]=\exp[\beta J_{xy}](p_{xy}\mathbb I[\sigma_x=\sigma_y]+1-p_{xy})$
to get (after expanding)
\begin{align*}
Z_{\Lambda,\beta,h}
&=c_1\sum_{\sigma\in\{\pm 1\}^{\Lambda}}\sum_{E\subset \mathcal E}\Big(\prod_{xy\in E}p_{xy}\mathbb I[\sigma_x=\sigma_y]\Big)\Big(\prod_{xy\notin E}(1-p_{xy})\Big),\end{align*}
where $c_1=c_1(\Lambda,J):=\prod_{xy\in \mathcal E}\exp[\beta J_{xy}]$. Let $E(\sigma)$ be the set of $xy$ with $\sigma_x=\sigma_y$. Then, one may check that $\prod_{xy\in E}\mathbb I[\sigma_x=\sigma_y]=\mathbb I[E\subset E(\sigma)]$ so that
 \begin{align*}
Z_{\Lambda,\beta,h}
&=c_1\sum_{\sigma\in\{\pm 1\}^{\Lambda}}\sum_{E\subset E(\sigma)} \Big(\prod_{xy\in E}p_{xy}\Big)\Big(\prod_{xy\notin E}(1-p_{xy})\Big)
\nonumber,\end{align*}
where in the first line we used that the term in parentheses equals 1. Interchanging the two sums, we obtain
\begin{align*}
Z_{\Lambda,\beta,h}&=c_1\sum_{E\subset \mathcal E} \#\{\sigma\in\{\pm 1\}^{\Lambda}\text{ such that }E(\sigma)\supset E\}\cdot \Big(\prod_{xy\in E}p_{xy}\Big)\Big(\prod_{xy\notin E}(1-p_{xy})\Big).
\end{align*}
Now, the number of configurations $\sigma$ with $E(\sigma)\supset E$ is equal to $2^{k(E)}$, where $k(E)$ is the number of connected components of the graph $G(E)$ with vertex-set $\Lambda\cup\{\mathfrak g\}$ and edge-set $E$ (simply observe that the condition is that $\sigma$ must be constant on each connected component of $E$).
In conclusion,
$
Z_{\Lambda,\beta,h}=c_1 \sum_{E\subset \mathcal E} r(E)
$
with 
\be\nonumber
r(E)=r_{\Lambda,\beta,h}(E):=2^{k(E)}\cdot \Big(\prod_{xy\in E}p_{xy}\Big)\Big(\prod_{xy\notin E}(1-p_{xy})\Big).
\ee
More generally, one may easily check that 
\be\nonumber
Z_{\Lambda,\beta,h}(\sigma_A)=c_1 \sum_{E\in \mathfrak F_A} r(E),
\ee
where $\mathfrak F_A$ is the set of $E\subset\mathcal E$ such that each connected component of the graph $G(E)$ intersects $A$ (resp. $A\cup\{\mathfrak g\}$) an even number of times -- which can be zero -- if $|A|$ is even (resp. $|A|$ is odd). For instance, if $A=\{x,y\}$, then $E\in\mathfrak F_A$ if and only if $x$ and $y$ are in the same connected component of $E$.

While the Fortuin-Kasteleyn expansion is a graphical expansion exactly like the high-temperature expansion, there is no restriction on the possible sets $E$. This motivates the introduction of the probability measure $\phi_{\Lambda,\beta,h}$ on subsets of $\mathcal E$ such that 
\be\nonumber
\phi_{\Lambda,\beta,h}(\{E_0\})=\frac{r(E_0)}{ \sum_{E\subset\mathcal E} r(E)}
\ee
for any $E_0\subset\mathcal E$. This measure, introduced by Fortuin and Kasteleyn in \cite{ForKas72}, is now called the random-cluster model on $\Lambda\cup\{\mathfrak g\}$. With this notation, we can write
\be\label{eq:79}
\langle\sigma_A\rangle_{\Lambda,\beta,h}=\phi_{\Lambda,\beta,h}(\mathfrak F_A).
\ee
This expression of $\langle\sigma_A\rangle_{\Lambda,\beta,h}$ is of a very different nature than \eqref{eq:77} and \eqref{eq:78}. Indeed, the quantity is expressed as the probability of an event under a certain probability measure. This is not the case of the other expressions which involve different objects in the numerator and the denominator, and therefore cannot be directly interprated as a probability.

\bigbreak
\noindent{\em 2.2.3 The low-temperature expansion.} The simplest and oldest expansion is given by the low-temperature expansion. It is based on the observation that spin configurations are in correspondence with subsets of $\mathcal E$. For $\sigma\in\{\pm 1\}^{\Lambda}$, set $C(\sigma):=\{xy\in\mathcal E,\sigma_x\ne\sigma_y\}$. This ``contour set'' is the complement of $E(\sigma)$ (defined in Section 2.2.2) in $\mathcal E$. Note that for $h\ne 0$ (resp. $h=0$), $C:\sigma\longmapsto C(\sigma)$ is a one-to-one (resp. two-to-one) map. Let $\mathcal E^*$ be the image of the map $C$. Then,
\begin{align*}
Z_{\Lambda,\beta,h}&=c_1\sum_{\sigma\in\{\pm 1\}^{\Lambda}}\prod_{xy\in C(\sigma)}\exp[-2\beta J_{xy}]=c_2\sum_{E\subset \mathcal E^*}t(E),\end{align*}
where $t(E)=\prod_{xy\in E}\exp[-2\beta J_{xy}]$ and  $c_2$ is equal to $c_1$ if $h\ne0$ and $2c_1$ if $h=0$. We do not wish to spend time studying this expansion, but let us mention that $\langle\sigma_A\rangle_{\Lambda,\beta,h}$ has also a nice expression in this setting\footnote{We leave this problem as an exercise for the reader, and recommend that the reader starts with the case $h\ne 0$ and $A=\{x\}$.}.
The low-temperature expansion was used by Peierls \cite{Pei36} to study the regime $\beta\gg1$ and prove that  $\beta_c(2)<\infty$. In the case of a n.n.f. model on a planar graph, Kramers and Wannier \cite{KraWan41} related the low-temperature expansion to the high-temperature expansion on the dual graph (this relation is now known as the Kramers-Wannier duality).

\subsection{Random-walk expansions}$ $\\
\noindent{\em 2.3.1 Symanzik-Brydges-Fr\"ohlich-Spencer expansion.}
In \cite{Sym66}, Symanzik proposed an expansion of $Z_{\Lambda,\beta,h}(\sigma_A)$ in terms of random-walks, i.e.~finite sequences of elements in $\Lambda\cup\{\mathfrak g\}$. The idea is to write 
\be\nonumber
Z_{\Lambda,\beta,h}(\sigma_A)=\sum_{(\omega_1,\dots,\omega_p)\in \Omega(A)}Z(\omega_1,\dots,\omega_p),
\ee
where $\Omega(A)$ is the set of families $(\omega_1,\dots,\omega_p)$ of walks satisfying that $p=\lceil |A|/2\rceil$ and the vertices in $A$ if $|A|$ is even (resp. $A\cup\{\mathfrak g\}$ if $A$ is odd) are either the beginning or the end of one of the walks. We also allow the family to be empty if $A=\emptyset$, so that we can write (for some function $Z(\cdot)$)
\be\label{eq:80}
\langle\sigma_A\rangle_{\Lambda,\beta,h}=\displaystyle\sum_{(\omega_1,\dots,\omega_p)\in\Omega(A)}\frac{Z(\omega_1,\dots,\omega_p)}{Z(\emptyset)}.\ee
In \cite{Sym66}, Symanzik obtained such an expansion for $\phi^4$ models.
Brydges, Fr\"ohlich and Spencer adapted Symanzik's expansion to the discrete setting in \cite{BryFroSpe82}. The proof is based on an expansion in Taylor series in Fourier space (which looks like some type of high-temperature expansion), followed by a clever partial resummation of the terms. It is based on a family of integration-by-part formulae. 

We do not provide additional details here since we will be focusing on a closely related random-walk expansion, called the {\em backbone expansion}. The applications of these two expansions are roughly the same, but we choose to focus on the later one due to its direct connection with the expansion in currents.
\medbreak
\noindent{\em 2.3.1 The backbone expansion.} Let us start with the following connection between integer-valued functions and occupation times of a family of walks. Below, $n(\omega,x,y)$ denotes the number of times $t$ at which $\omega(t)=x$ and $\omega(t+1)=y$ (i.e.~the number of times the walk uses the oriented edge from $x$ to $y$).

\begin{proposition}
Fix $\mathbf n=\mathbb N^{\mathcal E}$ and a family of walks $\{\omega_1,\dots,\omega_k\}$ such that 
\be\nonumber
\mathbf n_{xy}=\sum_{i=1}^k \big(n(\omega_i,x,y)+n(\omega_i,y,x)\big).
\ee
Then $\partial\mathbf n=A$ if and only if there exists $\{\omega_{i_1},\dots,\omega_{i_p}\}\subset \{\omega_1,\dots,\omega_k\}$ such that the beginnings and ends of the walks $\omega_{i_1},\dots,\omega_{i_p}$ partition $A$, and the other walks are all loops (i.e.~that they start and end at the same place).
\end{proposition}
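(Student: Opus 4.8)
The plan is to reduce the statement to a parity computation performed one walk at a time, and then to add the contributions of the walks using the hypothesis. For a single walk $\omega=(\omega(0),\dots,\omega(L))$, write $\mathbf n^\omega_{xy}:=n(\omega,x,y)+n(\omega,y,x)$ for its contribution to the current, and let $a:=\omega(0)$ and $b:=\omega(L)$ be its beginning and end. First I would compute $X(\mathbf n^\omega,x)\bmod 2$. Summing $\mathbf n^\omega_{xy}$ over $y$ counts the number of steps of $\omega$ incident to $x$, that is, the number of times the walk leaves $x$ plus the number of times it enters $x$. Every visit of $\omega$ to $x$ at an \emph{interior} time $0<t<L$ pairs one incoming with one outgoing step and thus contributes $2$; the only unpaired steps sit at the two extremities $t=0$ and $t=L$. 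Hence
\be\nonumber
X(\mathbf n^\omega,x)\equiv \mathbb{I}[a=x]+\mathbb{I}[b=x]\pmod 2,
\ee
so $X(\mathbf n^\omega,x)$ is odd precisely when $x$ is \emph{exactly one} of the two endpoints of $\omega$; in particular it is even for every $x$ as soon as $\omega$ is a loop ($a=b$), since then the right-hand side equals $2\,\mathbb{I}[a=x]$.

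The second step is to sum over the family. Writing $a_i,b_i$ for the beginning and end of $\omega_i$, and using that $X(\cdot,x)$ is additive in $\mathbf n$ together with the hypothesis $\mathbf n=\sum_{i=1}^k\mathbf n^{\omega_i}$, I obtain
\be\nonumber
X(\mathbf n,x)\equiv \sum_{i=1}^k\big(\mathbb{I}[a_i=x]+\mathbb{I}[b_i=x]\big)\pmod 2.
\ee
Consequently $x\in\partial\mathbf n$ if and only if $x$ occurs as an endpoint of an \emph{odd} number of the walks $\omega_i$, with each loop counted twice at its basepoint and therefore parity-neutral. In one sentence: $\partial\mathbf n$ is exactly the set of vertices met an odd number of times by the endpoints of the \emph{non-loop} walks of the family.

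With this dictionary in hand the two implications follow. For the ``if'' direction, suppose the selected sub-family $\omega_{i_1},\dots,\omega_{i_p}$ has endpoints partitioning $A$ while all remaining walks are loops: then each $x\in A$ is a single endpoint of exactly one selected (non-loop) walk and of no other non-loop walk, and the loops do not affect parities, so the displayed expression is odd exactly on $A$, i.e.\ $\partial\mathbf n=A$. For the ``only if'' direction I would take the selected sub-family to be precisely the non-loop walks; by the previous paragraph their endpoints have odd multiplicity exactly on $A$, and I would regroup these begin/end slots into the $p=|A|/2$ pairs that cover, hence partition, $A$.

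The delicate point, which I expect to be the main obstacle, is exactly this endpoint bookkeeping in the ``only if'' direction, together with the role of Griffiths' ghost vertex $\mathfrak g$. Because $\mathfrak g$ is excluded from $\partial\mathbf n$ by definition, a walk joining a vertex of $\Lambda$ to $\mathfrak g$ is \emph{not} a loop and yet deposits a ``visible'' endpoint only inside $\partial\mathbf n$; when $|A|$ is odd the single leftover unpaired source is absorbed by $\mathfrak g$, so the honest statement pairs the endpoints of $A\cup\{\mathfrak g\}$ rather than of $A$ alone, in line with the convention $p=\lceil|A|/2\rceil$ fixed earlier. The care required is to verify that, after discarding loops, the multiset of remaining endpoints reduces modulo $2$ to the indicator of $A$ (respectively $A\cup\{\mathfrak g\}$) and can genuinely be read off as a partition into the $p$ begin/end pairs, the ghost vertex being what makes the parity count close up when $|A|$ is odd.
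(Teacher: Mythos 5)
Your parity dictionary for a single walk (odd degree exactly at the two endpoints when they differ, even degree everywhere for a loop) is correct, and it settles the ``if'' direction essentially as the paper does (the paper dismisses that direction in one line). The genuine gap is in your ``only if'' direction, at the step where you ``take the selected sub-family to be precisely the non-loop walks'' and then ``regroup these begin/end slots into pairs that partition $A$''. Parity only tells you that the multiset of endpoints of the non-loop walks has \emph{odd} multiplicity exactly on $A$; it does not tell you that each point of $A$ occurs exactly once, and it does not let you discard non-loop walks whose endpoints cancel modulo $2$. Concretely, take $A=\emptyset$ and a family of two walks, one from $x$ to $y$ and one from $y$ back to $x$, each a single step across the edge $xy$: then $\mathbf n_{xy}=2$, so $\partial\mathbf n=\emptyset=A$, the endpoint multiset of the non-loop walks is $\{x,x,y,y\}$, yet neither walk is a loop, so no sub-family with the stated property exists \emph{for this family}. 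This shows both that your regrouping step cannot be carried out and that the proposition, read literally as a claim about an arbitrary prescribed family, is not what is actually being proved: the paper's proof implicitly reads the ``only if'' direction as the existence of \emph{some} family of walks realizing $\mathbf n$ with the stated structure.

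The paper's argument for that direction is of a different nature: it is constructive. One fixes orderings on vertices and edges and runs a peeling procedure that walks along edges carrying positive residual current, decrementing the current by one at each step, and jumps to a fresh starting point (first the remaining elements of $A$, then arbitrary vertices of positive residual degree) whenever it gets stuck. Since the walk can only get stuck at a vertex whose residual degree vanishes, a walk started at a source of the residual current must terminate at another residual source; hence the walks started from elements of $A$ pair up the sources of $\mathbf n$ (with $\mathfrak g$ absorbing the leftover one when $|A|$ is odd), and all subsequently produced walks are loops. It is this construction, not parity bookkeeping, that produces the partition of $A$, and it is also what defines the backbone afterwards, since the decomposition must be canonical (depending only on the fixed orderings) for the backbone expansion to make sense. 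To repair your proof you would need to add such a splitting/rerouting argument --- for instance concatenating your non-loop walks at shared endpoints and re-cutting --- which amounts to rebuilding the paper's peeling procedure.
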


\begin{proof}
One direction is very simple to check. If the family of walks satisfies this property, then $\partial \mathbf n=A$. The other direction is not much more difficult. Fix $\mathbf n\in \mathbb N^{\mathcal E}$. We introduce a peeling procedure as follows. Order the elements of $\Lambda\cup\{\mathfrak g\}$ and $\mathcal E$ (the choice of the orderings is not important). Then, construct $x(t)$ and $\mathbf n(t)$ inductively as follows: set $x(0)$ to be the smallest (according to the fixed ordering) element of $A$ and $\mathbf n(0)=\mathbf n$ and for any $t\ge0$,
\medbreak
\noindent {\em If} $\mathbf n(t)=0$, then stop the induction.
 
 \noindent {\em Else if} $\mathbf n(t)_{x(t)y}=0$ for any $y\in\Lambda\cup\{\mathfrak g\}$, then set $\mathbf n(t+1)=\mathbf n(t)$ and

- {\em If} $A\subset\{x(0),\dots,x(t)\}$, then set $x(t+1)$ be such that $X(\mathbf n(t),x(t+1))>0$.

- {\em Else} let $x(t+1)$ be the smallest $x\in A$ not yet visited.

\noindent {\em Else} set $x(t+1)$ to be the smallest vertex for which $\mathbf n(t)_{x(t)x(t+1)}>0$ and 
$$\mathbf n(t+1)_{xy}=\begin{cases}\mathbf n(t)_{xy}-1&\text{ if }xy=x(t)x(t+1),\\
\mathbf n(t)_{xy}&\text{ otherwise}.\end{cases}$$

In words, we walk along the edges with positive current and subtract one to the current at each step that we make. When it is not possible to walk (because $X(\mathbf n(t),x(t))$ is zero at the vertex $x(t)$), we jump to a new vertex with $X(\mathbf n(t),x)>0$. The orderings help us  in case of multiple choices.  Now, if $t_1,\dots, t_k$ denote the times at which $\mathbf n(t_i)=\mathbf n(t_i-1)$, we obtain the family $\{\omega_1,\dots,\omega_k\}$ by setting $\omega_i(s)=\omega_i(t_i+s)$ for $s\le t_{i+1}-t_i$ and $i\le k$.
\end{proof}
The family $\{\omega_1,\dots,\omega_k\}$ is not unique. Nevertheless, if the orderings on vertices and edges is fixed, the procedure described in the above proof provides us with a family of walks. The family $(\omega_1,\dots,\omega_{\lceil |A|/2\rceil})$ is called the {\em backbone} of $\mathbf n$ (the other walks in the construction are loops). Then, \eqref{eq:expansion current} immediately implies \eqref{eq:80} with \be\nonumber
Z(\omega_1,\dots,\omega_p):=\sum_{\mathbf n\text{ with backbone }\omega_1,\dots,\omega_p}w(\mathbf n).\ee 

\section{The switching lemma and the random currents representation}

From now on we focus on the expansion in terms of currents. One of the main goals is to rewrite correlations functions in terms of probability of events for a probability measure on currents\footnote{Indeed, we saw that \eqref{eq:77}, for instance, is very different from \eqref{eq:79}, since it involves different currents in the numerator and denominator, and therefore cannot be interpreted as a probability.}. The following perspective on the Ising model's phase transition  is driven by the observation that the onset of long range order coincides with a percolation phase transition for currents.   This point of  view
was developed in \cite{Aiz82} and a number of subsequent works. 

From now on, summations involving the variable $\mathbf n$ will always be assumed to be summations over currents. For this reason, we drop $\mathbf n\in \mathbb N^{\mathcal E}$ from the notation. Let $A\Delta B$ denote the symmetric difference $(A\setminus B)\cup(B\setminus A)$. 
Also, associate  to a current $\mathbf n$ the subset $\widehat {\mathbf n}:=\{xy\in\mathcal E:\mathbf n_{xy}\ge1\}\subset \mathcal E$. Recall the definition of $\mathfrak F_A$ from the previous section.

The following statement, called the {\em switching lemma}, provides a powerful tool for studying currents. It was introduced in \cite{GriHurShe70} and then used at its full extent and popularized in \cite{Aiz82}. The proof is fairly simple and purely combinatorial. 
\begin{lemma}[Switching lemma]\label{switching}
For any $A,B\subset \Lambda$  and any function
 $F:\mathbb N^{\mathcal E}\rightarrow \mathbb C$: 
\be\nonumber
\sum_{\substack{\partial {\mathbf n}_1=A\\ \partial {\mathbf n}_2=B}}F({\mathbf n}_1+{\mathbf n} _2)w({\mathbf n}_1)w({\mathbf n}_2)=\sum_{\substack{\partial {\mathbf n}_1=\emptyset\\ \partial {\mathbf n}_2=A\Delta B}}F({\mathbf n}_1+{\mathbf n} _2)w({\mathbf n}_1)w({\mathbf n}_2){\mathbb  I}[\widehat{{\mathbf n}_1+{\mathbf n} _2}\in\mathfrak F_A].\ee
\end{lemma}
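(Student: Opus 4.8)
The plan is to fix the total current $\mathbf m:=\mathbf n_1+\mathbf n_2$ and reorganize both sides as sums over $\mathbf m$, exploiting that $F$ depends only on this sum. The starting point is the multiplicative identity for the weights: for $\mathbf m=\mathbf n_1+\mathbf n_2$,
$$w(\mathbf n_1)\,w(\mathbf n_2)=w(\mathbf m)\prod_{xy\in\mathcal E}\binom{\mathbf m_{xy}}{(\mathbf n_1)_{xy}},$$
which follows directly from the definition of $w$ and the factorization $\tfrac{1}{(\mathbf n_1)_{xy}!\,(\mathbf n_2)_{xy}!}=\tfrac{1}{\mathbf m_{xy}!}\binom{\mathbf m_{xy}}{(\mathbf n_1)_{xy}}$. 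First I would record the boundary bookkeeping: since $X(\mathbf m-\mathbf n_1,x)\equiv X(\mathbf m,x)-X(\mathbf n_1,x)\pmod 2$, one has $\partial\mathbf n_2=\partial\mathbf m\,\Delta\,\partial\mathbf n_1$ whenever $\mathbf n_1+\mathbf n_2=\mathbf m$. Hence on the left-hand side the constraints $\partial\mathbf n_1=A,\ \partial\mathbf n_2=B$ force $\partial\mathbf m=A\Delta B$ and are equivalent to ``$\partial\mathbf n_1=A$ with $\partial\mathbf m=A\Delta B$'', while on the right-hand side the constraints $\partial\mathbf n_1=\emptyset,\ \partial\mathbf n_2=A\Delta B$ likewise force $\partial\mathbf m=A\Delta B$.

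Summing over $\mathbf m$ with $\partial\mathbf m=A\Delta B$ and factoring out $F(\mathbf m)\,w(\mathbf m)$, the lemma reduces to the purely combinatorial identity, valid for each such $\mathbf m$,
$$\sum_{\substack{\mathbf n_1\le\mathbf m\\ \partial\mathbf n_1=A}}\ \prod_{xy\in\mathcal E}\binom{\mathbf m_{xy}}{(\mathbf n_1)_{xy}}=\mathbb I[\widehat{\mathbf m}\in\mathfrak F_A]\sum_{\substack{\mathbf n_1\le\mathbf m\\ \partial\mathbf n_1=\emptyset}}\ \prod_{xy\in\mathcal E}\binom{\mathbf m_{xy}}{(\mathbf n_1)_{xy}},$$
the sums being over currents $\mathbf n_1\le\mathbf m$ coordinatewise. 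To prove this, I would pass to the multigraph picture. Let $\mathcal M$ be the multigraph on $\Lambda\cup\{\mathfrak g\}$ carrying $\mathbf m_{xy}$ parallel copies of each edge $xy$. Choosing, for each $xy$, which $(\mathbf n_1)_{xy}$ of the $\mathbf m_{xy}$ copies belong to $\mathbf n_1$ sets up a bijection between currents $\mathbf n_1\le\mathbf m$ counted with multiplicity $\prod_{xy}\binom{\mathbf m_{xy}}{(\mathbf n_1)_{xy}}$ and subgraphs $S\subseteq\mathcal M$, under which $\partial\mathbf n_1$ becomes the set of vertices of $\Lambda$ of odd degree in $S$. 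Thus each side counts subgraphs of $\mathcal M$ with a prescribed odd-degree set.

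The counting itself is a standard coset argument over $\mathbb F_2$: the subgraphs with empty $\Lambda$-odd-degree set form a group under symmetric difference (the boundary map restricted to $\Lambda$ is $\mathbb F_2$-linear), and the collection of subgraphs with odd-degree set $A$ is either empty or a coset of this group. Its cardinality is therefore $0$, or else equals the number of subgraphs with $\partial S=\emptyset$, which is exactly the desired identity provided one shows that such a subgraph exists precisely when $\widehat{\mathbf m}\in\mathfrak F_A$.

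The main obstacle is this last existence criterion, where the ghost vertex must be handled with care because $\partial$ ignores the parity at $\mathfrak g$. The mechanism is that the realizable full odd-degree sets of a connected multigraph are exactly the even-cardinality vertex subsets (a spanning-tree argument over $\mathbb F_2$). Working component by component, a subgraph with $\partial S=A$ exists iff every connected component \emph{not} containing $\mathfrak g$ meets $A$ in an even number of vertices, the component of $\mathfrak g$ being free to absorb the remaining parity. I would then check, distinguishing the cases $|A|$ even and $|A|$ odd and using $\sum_C|A\cap C|=|A|$, that this condition is exactly the requirement defining $\mathfrak F_A$ (even intersection with $A$, resp. with $A\cup\{\mathfrak g\}$). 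This reconciliation is where the full bookkeeping in the definition of $\mathfrak F_A$ enters, and it is the only genuinely delicate step.
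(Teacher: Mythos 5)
Your proof is correct. The paper itself gives no proof of the switching lemma (it only remarks that the proof is ``fairly simple and purely combinatorial'' and defers to \cite{GriHurShe70,Aiz82}), and your argument --- fixing the total current $\mathbf m=\mathbf n_1+\mathbf n_2$, using the binomial factorization of $w(\mathbf n_1)w(\mathbf n_2)$ to reduce the lemma to a count of subgraphs of the multigraph with $\mathbf m_{xy}$ parallel edges, and then combining the $\mathbb F_2$ coset argument with the parity criterion for realizable odd-degree sets --- is precisely the standard combinatorial proof from those references, including the genuinely delicate point you flag: the component of the ghost vertex absorbs the leftover parity, which is exactly what reconciles the existence criterion with the two-case (even/odd $|A|$) definition of $\mathfrak F_A$.
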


This lemma is particularly useful when considering sums of two currents\footnote{The idea of duplicating the system, meaning taking two independent Ising models, was already used by Griffiths \cite{Gri67} to prove his famous inequalities. The same strategy was later used by Lebowitz \cite{Leb74} for instance, who attributes the idea to Percus.}  since it enables to switch sources from one current to the other. For instance, applying the switching lemma to the square of \eqref{eq:77} (choosing $A=B$ and $F=1$) leads to
\be\label{eq:aaab}
\langle \sigma_A\rangle_{\Lambda,\beta,h}^2=\frac{\displaystyle\sum_{\partial {\mathbf n}_1=\partial {\mathbf n}_2=\emptyset}w({\mathbf n}_1)w({\mathbf n}_2){\mathbb  I}[\widehat{{\mathbf n}_1+{\mathbf n} _2}\in\mathfrak F_A]}{\displaystyle\sum_{\partial {\mathbf n}_1=\partial {\mathbf n}_2=\emptyset}w({\mathbf n}_1)w({\mathbf n}_2)}.
\ee
While taking the square of the correlation functions can appear as a big sacrifice, notice that the sums on the numerator and denominator are now on the same objects (namely pairs of sourceless currents). This seems to fit in the framework of probability theory, and it therefore calls for the following definition.
 \begin{definition}[Distribution on currents]\label{def:probability}
Fix $\beta,h\ge0$ and $A\subset \Lambda$. Define the distribution $\mathbf P^A=\mathbf P_{\Lambda,\beta,h}^A$ on currents such that for any $\mathbf n_0$ with $\partial\mathbf n_0=A$,
\be\mathbf P^A[\{\mathbf n_0\}]:=\frac{w({\mathbf n_0})}{\sum_{\partial\mathbf n=A}w(\mathbf n)}.\ee
\end{definition}

Note that ${\bf P}^A$ is supported on $\{\mathbf n:\partial\mathbf n=A\}$. Let $\nu \otimes \mu$ denote the product of the measures $\nu$ and $\mu$. With this notation, we deduce from \eqref{eq:aaab} that 
\begin{proposition}
For any $\beta,h\ge0$ and any $A\subset\Lambda$,
\be\label{eq:spin to RC}
\langle \sigma_A\rangle_{\Lambda,\beta,h}^2:={\bf P}^\emptyset\otimes{\bf P}^\emptyset\big[\widehat{{\mathbf n}_1+{\mathbf n}_2}\in \mathfrak F_A\big].
\ee
\end{proposition}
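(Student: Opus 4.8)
The plan is to show that \eqref{eq:spin to RC} is nothing more than a repackaging of the already-established identity \eqref{eq:aaab} in the language of the product measure $\mathbf P^\emptyset\otimes\mathbf P^\emptyset$. Since the switching lemma has already performed all the combinatorial work needed to produce \eqref{eq:aaab}, no further nontrivial argument is required: the task is purely to match the two formulas by unfolding definitions.

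First I would write out the product measure explicitly. By Definition~\ref{def:probability}, for any sourceless current $\mathbf n_0$ one has $\mathbf P^\emptyset[\{\mathbf n_0\}]=w(\mathbf n_0)/\sum_{\partial\mathbf n=\emptyset}w(\mathbf n)$, and $\mathbf P^\emptyset$ is supported on $\{\mathbf n:\partial\mathbf n=\emptyset\}$. Hence for a pair $(\mathbf n_1,\mathbf n_2)$ the product measure assigns
\be\nonumber
\mathbf P^\emptyset\otimes\mathbf P^\emptyset[\{(\mathbf n_1,\mathbf n_2)\}]=\frac{w(\mathbf n_1)w(\mathbf n_2)}{\big(\sum_{\partial\mathbf n=\emptyset}w(\mathbf n)\big)^2},
\ee
supported on pairs with $\partial\mathbf n_1=\partial\mathbf n_2=\emptyset$. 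Summing this over the event $\{\widehat{\mathbf n_1+\mathbf n_2}\in\mathfrak F_A\}$ then gives
\be\nonumber
\mathbf P^\emptyset\otimes\mathbf P^\emptyset[\widehat{\mathbf n_1+\mathbf n_2}\in\mathfrak F_A]=\frac{\sum_{\partial\mathbf n_1=\partial\mathbf n_2=\emptyset}w(\mathbf n_1)w(\mathbf n_2)\mathbb I[\widehat{\mathbf n_1+\mathbf n_2}\in\mathfrak F_A]}{\big(\sum_{\partial\mathbf n=\emptyset}w(\mathbf n)\big)^2}.
\ee

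The only remaining observation is that the normalizing denominator factorizes, $\big(\sum_{\partial\mathbf n=\emptyset}w(\mathbf n)\big)^2=\sum_{\partial\mathbf n_1=\partial\mathbf n_2=\emptyset}w(\mathbf n_1)w(\mathbf n_2)$, which is precisely the denominator appearing in \eqref{eq:aaab}. Thus the right-hand side above coincides term-for-term with the right-hand side of \eqref{eq:aaab}, which equals $\langle\sigma_A\rangle_{\Lambda,\beta,h}^2$, completing the identification.

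There is no genuine obstacle here: the substance of the statement lives entirely in the switching lemma used to derive \eqref{eq:aaab}, and the proposition merely records that, after squaring, numerator and denominator both range over the same sample space — pairs of sourceless currents — so that the ratio can be read as the probability of a single event under $\mathbf P^\emptyset\otimes\mathbf P^\emptyset$. If anything, the one point requiring a moment's attention is the factorization of the normalizing constant, namely recognizing that the denominator of \eqref{eq:aaab} is the square of the sum $\sum_{\partial\mathbf n=\emptyset}w(\mathbf n)$ that normalizes each copy of $\mathbf P^\emptyset$.
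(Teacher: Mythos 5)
Your proposal is correct and takes essentially the same route as the paper, which likewise deduces the proposition directly from \eqref{eq:aaab}: the only content is recognizing that ratio as the probability of the event $\{\widehat{\mathbf n_1+\mathbf n_2}\in\mathfrak F_A\}$ under $\mathbf P^\emptyset\otimes\mathbf P^\emptyset$, once the denominator is seen to factorize as the square of the normalizing constant. Your write-up simply makes explicit the definitional unfolding that the paper leaves implicit.
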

One cannot miss the comparison with \eqref{eq:79}: both right-hand sides involve the probability of the event $\mathfrak F_A$ for two different laws on random subsets of $\mathcal E$. It is important to note that in one case the left-hand side is the spin-spin correlation, while in the second case it is its square.

Similar applications of the switching lemma imply more general statements. For instance, Griffiths' second inequality 
$$
\langle \sigma_A\sigma_B\rangle_{\Lambda,\beta,h}\ge \langle\sigma_A\rangle_{\Lambda,\beta,h}\langle\sigma_B\rangle_{\Lambda,\beta,h}$$ (see \cite{Gri67})
is obtained by observing that
\be\nonumber
1-\frac{\langle\sigma_A\rangle_{\Lambda,\beta,h}\langle\sigma_B\rangle_{\Lambda,\beta,h}}{\langle \sigma_A\sigma_B\rangle_{\Lambda,\beta,h}}={\bf P}^\emptyset\otimes{\bf P}^{A\Delta B}\big[\widehat{{\mathbf n}_1+{\mathbf n}_2}\notin \mathfrak F_A\big]\ge0.
\ee
Note that one gets immediately that spin-spin expectations are increasing in $\beta\ge0$ (and one may prove the same in $h\ge0$) since
\be\nonumber
\frac{d}{d\beta}\langle\sigma_A\rangle_{\Lambda,\beta,h}=\sum_{xy\in \mathcal E}J_{xy}\big(\langle\sigma_A\sigma_x\sigma_y\rangle_{\Lambda,\beta,h}-\langle\sigma_A\rangle_{\Lambda,\beta,h}
\langle\sigma_x\sigma_y\rangle_{\Lambda,\beta,h}\big)\ge0.
\ee
\begin{remark}\label{rmk:comparison}
The random set $\widehat{\mathbf{n}}\subset\mathcal E$ with law ${\bf P}^\emptyset_{\Lambda,\beta,h}$ can be directly related to the high-temperature expansion and the random-cluster model. Indeed, consider a random variable $E\subset\mathcal E$ with law $\mu^\emptyset_{\Lambda,\beta,h}$ attributing probability proportional to $x(E)$ to each $E$ with $\partial E=\emptyset$ and 0 otherwise. Adding to $E$ each $xy\in \mathcal E$ independently with probability $1-1/\cosh(\beta J_{xy})$ gives a random variable with law ${\bf P}_{\Lambda,\beta,h}^\emptyset$ (this is fairly easy to see by noticing that the set $E$ plays the same role as the set of $xy\in\mathcal E$ with $\mathbf n_{xy}$ odd). Adding to this new random graph each $xy\in\mathcal E$ independently with probability $1-\exp(-\beta J_{x,y})$ leads to a random subset of $\mathcal E$ with law $\phi_{\Lambda,\beta,h}$ (see \cite{GriJan09,LupWer15}). In words, the configuration of random currents is sandwiched between the high-temperature and the random-cluster configurations.
\end{remark}

\section{Three applications of random currents}

The strength of the random currents representation is the alliance of two possible points of view: first, the backbone of a current can be interpreted as a family of walks, and second, the trace of currents (which is a subset of $\mathcal E$) can be used to express correlations in the model. In words, the currents provide both a random-walk and a percolation interpretation for the Ising model. In the following sections, we describe three applications of random currents. Each one of them relies directly or indirectly on properties inspired by both points of view.

For simplicity, we restrict our attention to the n.n.f. Ising model on $\mathbb Z^d$.

\subsection{Sharpness of the phase transition}

The critical parameter $\beta_c$ discriminates between an ordered regime ($m^*(\beta)>0$) and a disordered regime ($m^*(\beta)=0$). It is not difficult to see that 
$\langle\sigma_x\sigma_y\rangle_{\beta,0}$ remains bounded away from zero (respectively tends to zero) in the ordered (respectively disordered) regime. One is naturally led to the question of the speed of decay to zero when $\beta<\beta_c$. 

In 1987, Aizenman, Barsky and Fernandez \cite{AizBarFer87} used random currents to prove that the speed of decay is exponential (Property (3) of Theorem~\ref{thm:1} below). As a byproduct of their proof, they also showed that the magnetization satisfies mean-field lower bounds when $h=0$ and $\beta\searrow \beta_c$, and when $\beta=\beta_c$ and $h\searrow 0$ (Properties (1) and (2) of Theorem~\ref{thm:1}). We present a simplified version of the results here (the constants are not optimized). 

\begin{theorem}\label{thm:1} Consider the n.n.f. Ising model on $\mathbb Z^d$, then

(1) There exists $c_0\in (0,\infty)$ such that for $\beta>\beta_c$, $m^*(\beta)\ge c_0(\beta-\beta_c)^{1/2}$.

(2) There exists $c_1\in(0,\infty)$ such that for any $h\ge0$, $m(\beta_c,h)\ge c_1h^{1/3}$.

(3) For $\beta<\beta_c$, there exists $c_2=c_2(\beta)>0$ such that $$\langle\sigma_0\sigma_x\rangle_{\beta,0}\le \exp[-c_2 \|x\|]$$

$\ \ \ \ $ for all 
 $x\in \mathbb Z^d$, where $\|\cdot\|$ is the $\ell^1$-norm on $\mathbb R^d$.
\end{theorem}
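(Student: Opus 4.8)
The plan is to reduce all three assertions to the study of a coupled system of \emph{differential inequalities} for the finite-volume magnetization $M=M_\Lambda(\beta,h):=\langle\sigma_0\rangle_{\Lambda,\beta,h}$ (with $\Lambda=\Lambda_n$, letting $n\to\infty$ at the end), and then to integrate these inequalities. The two relevant derivatives of $M$ are already correlation functions: the computation displayed just before the theorem gives $\tfrac{\partial M}{\partial\beta}=\sum_{xy\in\mathcal E}J_{xy}\big(\langle\sigma_0\sigma_x\sigma_y\rangle-\langle\sigma_0\rangle\langle\sigma_x\sigma_y\rangle\big)$, and the analogous differentiation in the ghost coupling $J_{x\mathfrak g}=h$ yields $\tfrac{\partial M}{\partial h}=\beta\sum_{x\in\Lambda}\big(\langle\sigma_0\sigma_x\rangle-\langle\sigma_0\rangle\langle\sigma_x\rangle\big)=:\beta\chi$, with $\chi$ the susceptibility. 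Thus $M$, $\chi$ and certain truncated higher-order correlations are the only quantities to control, and everything hinges on estimating truncated correlations from above.

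This is where the switching lemma enters, exactly as in the derivation of Griffiths' second inequality above. Doubling the current converts each truncated correlation into the probability of a connectivity event for a pair of currents; for instance $1-\tfrac{\langle\sigma_0\rangle\langle\sigma_x\rangle}{\langle\sigma_0\sigma_x\rangle}=\mathbf P^\emptyset\otimes\mathbf P^{\{0,x\}}\big[0\nlr\mathfrak g\big]$ in $\widehat{\mathbf n_1+\mathbf n_2}$, and similar identities hold for the higher-order sums appearing in $\partial_\beta M$. The target is to massage these probabilistic expressions into the Aizenman--Barsky--Fern\'andez differential inequalities, which, in schematic form that suppresses the precise coefficients, bound $M$ from above by a field-response term of order $\beta h\chi$, a term coupling the $\beta$- and $h$-derivatives of $M$, and -- crucially -- a \emph{cubic} term of order $M^3$. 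The cubic term is the whole point: it is what forces the mean-field exponents.

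I would then integrate. For (2), fixing $\beta=\beta_c$ and using that the coupling term degenerates at criticality, the inequality collapses to $M\lesssim \beta_c h\chi+M^3$; balancing the field term against the cubic term yields $M\gtrsim h^{1/3}$. For (1), comparing the inequalities in $\beta$ across $\beta_c$ feeds the same cubic mechanism and produces the square-root lower bound $m^*(\beta)\gtrsim(\beta-\beta_c)^{1/2}$, after passing to $h\searrow0$ and $n\to\infty$ (Griffiths' inequalities guarantee the limits exist and preserve the inequalities). For (3), the same system, read on the disordered side, shows that $\chi(\beta)<\infty$ for every $\beta<\beta_c$; this finiteness is then upgraded to genuine exponential decay of $\langle\sigma_0\sigma_x\rangle_{\beta,0}$ by feeding it into a Simon--Lieb-type sub-multiplicative inequality (itself a switching-lemma/Griffiths consequence), iterated over a box separating $0$ from $x$.

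The main obstacle will be the derivation of the differential inequality carrying the $M^3$ term. Unlike Griffiths' inequalities, which need a single application of the switching lemma, this bound demands a genuine multi-current argument: one introduces two or three independent currents, conditions on the cluster of a source in the union $\widehat{\mathbf n_1+\mathbf n_2}$ (using that, conditionally, the current restricted to the complement is again a random current there), and exploits the resulting tree-like branching to extract three essentially disjoint connections, each contributing a factor of $M$. Controlling the combinatorics of this conditioning -- and checking that the error terms assemble into exactly the claimed inequality -- is the technical heart; by comparison, the ODE analysis of the last step and the finite-to-infinite-volume passage are routine.
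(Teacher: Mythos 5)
Your plan is correct, but it is not the proof the paper presents: you have reconstructed the original Aizenman--Barsky--Fern\'andez argument of \cite{AizBarFer87}, whereas the paper summarizes the alternative random-current proof of \cite{DumTas15}. The paper's route replaces your cubic partial differential inequality by the finite-volume quantity $\varphi_S(\beta)=\sum_{x\in\partial S}\langle\sigma_0\sigma_x\rangle_{S,\beta,0}$ and the inequality $\partial_\beta\big(m(\beta,h)^2\big)\ge c_3\,\big(\inf_{S\ni 0}\varphi_\beta(S)\big)\,\big(1-m(\beta,h)^2\big)$, together with an auxiliary critical point $\tilde\beta_c:=\sup\{\beta:\exists S\ni 0,\ \varphi_\beta(S)<1\}$. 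Above $\tilde\beta_c$ this inequality integrates directly to the square-root bound (1); below $\tilde\beta_c$ the Simon--Lieb inequality iterates to give the exponential decay (3) --- this is where your ``$\chi<\infty$ plus Simon--Lieb'' step appears, except that the finite-volume criterion $\varphi_\beta(S)<1$ is taken as the \emph{definition} of the subcritical regime rather than deduced from finiteness of the susceptibility via the hard part of ABF; items (1) and (3) then force $\tilde\beta_c=\beta_c$. Item (2) is obtained by combining the first inequality with the \emph{easy} ABF inequality $2d\,\partial_h(m^3)\ge\partial_\beta(m^2)$, not from the cubic PDI. The trade-off is clear: your (ABF) route extracts all three items from a single differential inequality, but that inequality, $M\le\beta h\,\partial_h M+M^3+\beta M^2\,\partial_\beta M$, is precisely the multi-current conditioning argument you identify as the technical heart, and the subsequent Aizenman--Barsky two-variable integration (needed both for (1) and to convert the PDI into ``$\chi(\beta)<\infty$ for all $\beta<\beta_c$'') is itself delicate; the paper's route needs only a simpler switching-lemma computation for its main inequality, runs in exact parallel with the Bernoulli-percolation proof of \cite{DumTas15}, and extends more transparently to general invariant ferromagnetic interactions. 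One gloss in your write-up deserves correction: at $\beta=\beta_c$ the term $\beta M^2\partial_\beta M$ does not ``degenerate''; in ABF it is absorbed by invoking the companion inequality $\partial_\beta M\le 2d\,M\,\partial_h M$ before integrating in $h$, and only then does the balance of the field term against $M^3$ yield $m(\beta_c,h)\gtrsim h^{1/3}$.
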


An alternative proof also relying on random currents was provided recently in \cite{DumTas15}. Let us summarize it now. The proof of \cite{DumTas15} is based on the following quantity:
 For $\beta\ge0$ and a finite subset $S$ of $\mathbb Z^d$, set 
$\partial S:=\{x\in S:\exists y\notin S\text{ neighbor of }x\text{ in }\mathbb Z^d\}$
 and define
\be\nonumber
\varphi_S(\beta):=\sum_{x\in \partial S}\langle \sigma_0\sigma_x\rangle_{S,\beta,0}.
\ee
 This quantity is related to the magnetization through the following inequality
\be\label{eq:diff}
\frac{\partial}{\partial\beta}\big(m(\beta,h)^2\big)\ge c_3\,\Big(\inf_{S\ni 0}\varphi_\beta(S)\Big)\,\big(1-m(\beta,h)^2\big),
\ee
where $c_3>0$ is a certain explicit constant that we do not specify here for simplicity.
This differential inequality and the quantity $\varphi_S(\beta)$ are motivated by a similar inequality in the context of Bernoulli percolation (see \cite[(1.1) and (1.3)]{DumTas15}).

Equation \eqref{eq:diff} is proved using random currents, for which the parallel with Bernoulli percolation becomes uncanny: the role of Bernoulli percolation is replaced by $\widehat{\mathbf n_1+\mathbf n_2}$, where $\mathbf n_1$ and $\mathbf n_2$ are independent currents sampled according to an infinite-volume version of ${\bf P}^\emptyset$. Then, the proofs of \eqref{eq:diff} and its Bernoulli percolation analogue \cite[(1.3)]{DumTas15} are very close in spirit.

Inequality \eqref{eq:diff} motivates the introduction of a new critical parameter $\tilde\beta_c$ defined as the supremum of the $\beta\ge0$ for which there exists a finite set $S\ni0$ with $\varphi_\beta(S)<1$. With this definition, we automatically get that for any $\beta\ge \tilde\beta_c$,
\be\label{eq;diff}
\frac{\partial}{\partial \beta}\big(m(\beta,h)^2\big)\ge c_3\,\big(1-m(\beta,h)^2\big)
\ee
which, when integrated between $\tilde \beta_c$ and $\beta$, leads to 
$
m(\beta,h)\ge c_0(\beta-\tilde\beta_c)^{1/2}.
$
Letting $h\searrow 0$ gives us Item (1) with $\tilde\beta_c$ instead of $\beta_c$. Note that it automatically implies that $\tilde\beta_c\ge \beta_c$. 

Proving Item (3) with $\tilde \beta_c$ instead of $\beta_c$ would conclude the proof, since it would automatically imply that $\tilde\beta_c=\beta_c$. The proof of Item (3) for $\beta<\tilde\beta_c$ follows fairly quickly from the following lemma, since it implies that for any $S\ni 0$ contained in $\Lambda_n$ and any $x\in \mathbb Z^d$, $\langle \sigma_0\sigma_x\rangle_{\beta,0}\le \varphi_\beta(S)^{\|x\|/n}.$
 \begin{lemma}[Simon-Lieb inequality \cite{Lie80}]\label{lem:finiteCriterion:ising}
  Let $S$ be a finite subset of $\mathbb Z^d$ containing 0. For any $x\notin S$, 
\be\label{eq:Simon}\langle \sigma_0\sigma_x\rangle_{\beta,h}\le \sum_{y\in \partial S}\langle \sigma_0\sigma_y\rangle_{S,\beta,h}\langle \sigma_y\sigma_x\rangle_{\beta,h}.\ee
\end{lemma}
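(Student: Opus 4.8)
The plan is to prove the inequality in a large finite box $\Lambda\supseteq S$ with $x\in\Lambda\setminus S$, and then recover the statement on $\mathbb Z^d$ by letting $\Lambda\uparrow\mathbb Z^d$ and using the convergence of finite-volume correlations. Throughout I would use the current representation \eqref{eq:77} with source set $\{0,x\}$, so that $\langle\sigma_0\sigma_x\rangle_{\Lambda,\beta,h}$ equals the ratio of $\sum_{\partial\mathbf n=\{0,x\}}w(\mathbf n)$ to $\sum_{\partial\mathbf n=\emptyset}w(\mathbf n)$. The one structural input I need is that whenever $\partial\mathbf n=\{0,x\}$ the two sources are connected in $\widehat{\mathbf n}$; this is exactly the content of the backbone Proposition, whose backbone for such an $\mathbf n$ is a single walk from $0$ to $x$. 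Since $0\in S$ and $x\notin S$, this walk (hence the connection $0\lr x$ in $\widehat{\mathbf n}$) must traverse an edge leaving $S$, and the endpoint of such an edge inside $S$ lies in $\partial S$. So every current contributing to the numerator crosses $\partial S$, and the whole strategy is to cut it at a crossing.

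To keep a \emph{single} crossing vertex rather than a subset of $\partial S$, I would make the choice canonical by cutting the backbone at its first exit from $S$: after fixing the orderings used to build the backbone, let $y=y(\mathbf n)\in\partial S$ be the last vertex in $S$ before the backbone first leaves it. This splits the backbone into a piece from $0$ to $y$ staying inside $S$ and a piece from $y$ to $x$ that lives in all of $\Lambda$ (it is allowed to re-enter $S$ afterwards, which is precisely why the \emph{full-volume} factor $\langle\sigma_y\sigma_x\rangle_{\Lambda,\beta,h}$ should appear and not a constrained version of it). Partitioning the numerator according to $y(\mathbf n)$, the target becomes the bound, for each fixed $y\in\partial S$,
\be\nonumber
\sum_{\substack{\partial\mathbf n=\{0,x\}\\ y(\mathbf n)=y}}w(\mathbf n)\ \le\ \frac{Z_S[\{0,y\}]\,Z_\Lambda[\{y,x\}]}{Z_\Lambda[\emptyset]},
\ee
where $Z_R[A]:=\sum_{\partial\mathbf n=A}w(\mathbf n)$ with $\mathbf n$ supported on $\mathcal E(R)$. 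Summing over $y\in\partial S$ and dividing by $Z_\Lambda[\emptyset]$ reproduces $\sum_{y\in\partial S}\langle\sigma_0\sigma_y\rangle_S\,\langle\sigma_y\sigma_x\rangle_{\Lambda,\beta,h}$, as desired.

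The decoupling of the two pieces is where the switching lemma (Lemma~\ref{switching}) enters. Conditioning on the explored portion of the configuration that realizes the first crossing at $y$, the current splits into a part supported on $\mathcal E(S)$ carrying the source $0$ and a part supported on the remaining edges carrying the source $x$, the two sharing the vertex $y$. Applying the switching lemma to this pair converts the even-parity constraints at the other shared vertices into connectivity indicators $\mathbb I[\,\cdot\in\mathfrak F]$ and moves the boundary sources so that the inside piece acquires source $\{0,y\}$ and the outside piece source $\{y,x\}$. Positivity of all weights, i.e. Griffiths' first inequality already noted after \eqref{eq:77}, then turns the identity into the inequality once the indicators are dropped.

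The hard part is controlling the weights under this splitting. Cutting one current $\mathbf n$ into two currents distorts the edge weights by multinomial factors $\binom{\mathbf n_e}{(\mathbf n_1)_e}$ coming from the factorials in $w$, and the loops attached to the inside and outside portions are a priori shared and free to roam across $\partial S$, so the naive factorization $w(\mathbf n)=w(\mathbf n_1)w(\mathbf n_2)$ fails and only a sub-multiplicative bound can hold. This is exactly the situation the switching lemma is designed to absorb: summing over all admissible splittings of a fixed $\mathbf n_1+\mathbf n_2$ recombines the multinomial factors, while the canonical first-exit choice of $y$ prevents the crossing set from proliferating into a sum over subsets of $\partial S$. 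Making this weight sub-multiplicativity precise — verifying that restricting the inside loops to $S$ and handing the rest to the outside piece produces the clean product $Z_S[\{0,y\}]\,Z_\Lambda[\{y,x\}]$ with the inequality pointing the right way — is the principal obstacle; the finite-volume reduction and the limit $\Lambda\uparrow\mathbb Z^d$ are then routine.
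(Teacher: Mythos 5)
Your overall route --- finite volume, the representation \eqref{eq:77} with sources $\{0,x\}$, the backbone walk, and a cut at the first exit from $S$ --- is exactly the proof the paper has in mind: it is the random-currents transcription of the Green-function analogy $G(0,x)\le\sum_y G_S(0,y)G(y,x)$ that the paper gives. The problem is that your write-up stops where the lemma actually lives. The displayed bound $\sum_{\partial\mathbf n=\{0,x\},\,y(\mathbf n)=y}w(\mathbf n)\le Z_S[\{0,y\}]\,Z_\Lambda[\{y,x\}]/Z_\Lambda[\emptyset]$ \emph{is} \eqref{eq:Simon}, and you leave it unproved (``the principal obstacle''). Moreover, the tool you invoke cannot deliver it as described: Lemma~\ref{switching} is a statement about a double sum over two \emph{independent} currents with prescribed sources, whereas your two pieces are obtained by splitting a single current according to its own exploration, so their ``sources'' are not free summation variables; no version of the switching lemma acts on such a conditioned decomposition. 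Note also that the multinomial factors you worry about point in the harmless direction --- $w(\mathbf n_1+\mathbf n_2)\le w(\mathbf n_1)\,w(\mathbf n_2)$ always, the ratio being a product of inverse binomial coefficients --- but exploiting this naively is hopeless for a different reason: since the splitting $\mathbf n\mapsto(\mathbf n_{\omega_1},\mathbf n-\mathbf n_{\omega_1})$ is injective, submultiplicativity bounds your numerator by $\bigl(\sum_{\omega_1}w(\mathbf n_{\omega_1})\bigr)\,Z_\Lambda[\{y,x\}]$, and the sum over essentially unconstrained walks $\omega_1$ from $0$ to $y$ grows exponentially in $\beta$ and in $|S|$, while what you need in that slot is $\langle\sigma_0\sigma_y\rangle_{S,\beta,h}\le1$. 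The known proof (Aizenman \cite{Aiz82}, the route followed in \cite{Lie80,DumTas15}) avoids this entropy catastrophe by an \emph{exact} computation: for a fixed admissible walk $\gamma$, the sum of $w(\mathbf n)$ over currents with backbone $\gamma$ factorizes as an explicit weight for $\gamma$ (bounded by $\prod_{e\in\gamma}\tanh(\beta J_e)$) times a sourceless partition function over the edges not blocked by the exploration. From this factorization, the two properties your argument needs --- the splitting $\rho(\gamma_1\circ\gamma_2)\le\rho(\gamma_1)\rho(\gamma_2)$ and the domain monotonicity $\rho_\Lambda(\gamma_1)\le\rho_S(\gamma_1)$ for $\gamma_1\subset S$, which is precisely what produces the finite-volume factor $\langle\sigma_0\sigma_y\rangle_{S}$ rather than $\langle\sigma_0\sigma_y\rangle_{\Lambda}$ --- reduce to Griffiths-type inequalities for ratios of partition functions. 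In fairness, the paper itself only sketches this step; but that backbone machinery is the content of the lemma, not routine verification, so your text is a correct strategy outline with the proof missing at its core.

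A second, independent hole: your claim that the backbone from $0$ to $x$ must leave $S$ through a vertex of $\partial S$ silently assumes $h=0$. For $h>0$ every vertex of $S$ is joined to the ghost $\mathfrak g$, so the backbone can exit $S$ along a ghost edge without ever visiting $\partial S$, and your partition of the numerator according to $y(\mathbf n)\in\partial S$ misses those currents. This is not a removable technicality: with $\partial S$ the lattice boundary as defined in the paper, the inequality genuinely requires $h=0$ (for fixed $h>0$, fixed $S$ and $\beta\searrow0$, the left side is at least $\langle\sigma_0\rangle\langle\sigma_x\rangle\approx\tanh(\beta h)^2$ while the right side is of order $|\partial S|\tanh(\beta h)^4$ plus terms that are exponentially small in the distances, so the bound fails). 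Your proof should therefore be run at $h=0$, which is how the paper applies the lemma, or with an extra boundary term accounting for ghost crossings. Finally, a small bookkeeping point: after dividing by $Z_\Lambda[\emptyset]$, the factor $Z_S[\{0,y\}]/Z_\Lambda[\emptyset]$ is bounded by $\langle\sigma_0\sigma_y\rangle_{S}$ only via the additional observation $Z_S[\emptyset]\le Z_\Lambda[\emptyset]$; the direction is the right one, but it is an inequality, not the identity you assert.
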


The proof of \eqref{eq:Simon} is based on the backbone representation and the interpretation in terms of random-walk attached to it. To understand intuitively \eqref{eq:Simon}, consider for a moment the simple random-walk model. Let $G(x,y)$ be the expected number of visits to $y$ starting from $x$, and $G_S(x,y)$ the same quantity when counting visits before exiting $S$. Then, the union bound and the Markov property at the first visit of $\partial S$ leads to 
$
G(0,x)\le \sum_{y\in \partial S} G_S(0,y)G(y,x),
$
which is the direct analogue of \eqref{eq:Simon}. Therefore, it does not come as a surprise that the backbone representation can be used to prove the lemma.

Item (2) with $\tilde\beta_c$ instead of $\beta_c$ is obtained via the following easy differential inequality (see \cite[(1.12)]{AizBarFer87}), which is also based on the percolation interpretation of random currents:
\be
2d \frac{\partial}{\partial h}\big(m(\beta,h)^3\big)\ge \frac{\partial}{\partial \beta}\big(m(\beta,h)^2\big).
\ee

In conclusion, the proof of Theorem~\ref{thm:1} is heavily based on both the percolation and random-walk interpretations of currents. The proof extends to any invariant ferromagnetic interactions. In fact, the differential inequality and Simon's inequality have natural analogues in this context, which are even cleaner to state, provided $\varphi_\beta(S)$ is defined slightly differently (we chose the simplest formulation here). We refer to \cite{DumTas15} for details.

\subsection{Continuity of the phase transition for the n.n.f. Ising model on $\mathbb Z^d$}\label{sec:3}

Statistical physics is often interested in the classification of infinite-volume measures of a given model. In the case of the Ising model on $\mathbb Z^d$, such measures, called {\em Gibbs measures}, are defined as probability spaces $(\{\pm1\}^{\mathbb Z^d},\mathcal F,\langle\cdot\rangle)$ satisfying the famous Dobrushin-Lanford-Ruelle condition.

We already encountered an example of Gibbs measure at inverse temperature $\beta$ in the introduction since $\langle \cdot\rangle_{\beta,0}$ is a Gibbs measure called the {\em Gibbs measure with free boundary conditions} (it is usually denoted by $\langle\cdot\rangle_\beta^0$ and we adopt this convention from now on). But one may consider the limits $\langle\cdot\rangle_\beta^+$ and $\langle \cdot\rangle_\beta^-$ of $\langle\cdot\rangle_{\beta,h}$ as $h\searrow 0$ and $h\nearrow 0$ respectively, which are also Gibbs measures (they are called {\em Gibbs measures with $+$ and $-$ boundary conditions}).

There may be many other Gibbs measures for a fixed $\beta\ge0$. In fact, one can show that there are multiple Gibbs measures if and only if $\langle\cdot\rangle_\beta^+\ne\langle\cdot\rangle_\beta^-$. This criterion implies that the Gibbs measure is unique (resp. non-unique) if $\beta<\beta_c$ (resp. $\beta>\beta_c$). Note that the case $\beta=\beta_c$ remains much more difficult to treat and it is a priori unclear whether there exist several Gibbs measures or not. 
For instance, the Ising model on $\mathbb Z$ with $J_{xy}=1/|x-y|^2$ is known to have several Gibbs measures at $\beta_c$, see \cite{AizChaCha88}. Yet, this is expected never to be the case for the n.n.f. model.

In dimension 2, Yang's result $m^*(\beta_c)=0$ implies that $\langle\cdot\rangle_{\beta_c}^+=\langle\cdot\rangle_{\beta_c}^-$. In dimension four and more, a similar result \cite{AizFer86} implied the uniqueness as well. The case of dimension 3 remained open for a while, mostly because the physical understanding of statistical physics in this dimension is slightly more limited. The following theorem fills this gap.

\begin{theorem}[\cite{AizDumSid15}]\label{thm:continuous}
There exists a unique Gibbs measure $\langle\cdot\rangle_{\beta_c}$ at $\beta_c$ for the n.n.f Ising model on $\mathbb Z^d$ with $d\ge3$. Furthermore, 
\be\label{eq:44}\frac{c_4}{\|x-y\|^{d-1}}\le \langle\sigma_x\sigma_y\rangle_{\beta_c}\le \frac{c_5}{\|x-y\|^{d-2}}\ee
for any $x,y\in\mathbb Z^d$, where $c_4,c_5\in(0,\infty)$ are universal constants.
\end{theorem}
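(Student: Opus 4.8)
The plan is to split the statement into two independent tasks: the \emph{uniqueness} of the Gibbs measure at $\beta_c$, and the \emph{two-sided power-law control} of the critical two-point function. For uniqueness I would use the criterion recalled in the text, that there is a unique Gibbs measure if and only if $\langle\cdot\rangle_{\beta_c}^+=\langle\cdot\rangle_{\beta_c}^-$; by spin-flip symmetry and Griffiths' inequalities this is in turn equivalent to the vanishing of the spontaneous magnetization $m^*(\beta_c)=0$. So the whole game reduces to proving that $m^*$ is continuous at $\beta_c$, i.e. vanishes there, and this is exactly where random currents enter.

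The heart of the argument is a percolation interpretation of $m^*$. First I would push the identity $\langle\sigma_0\rangle_{\Lambda,\beta,h}^2={\bf P}^\emptyset\otimes{\bf P}^\emptyset[0\longleftrightarrow\mathfrak g]$ (this is \eqref{eq:spin to RC} with $A=\{0\}$: the component of $0$ is forced to contain the ghost) to the infinite-volume, zero-field limit, obtaining $m^*(\beta)^2={\bf P}^\emptyset\otimes{\bf P}^\emptyset[0\longleftrightarrow\infty]$ for two independent sourceless currents $\mathbf n_1,\mathbf n_2$ on $\mathbb Z^d$. On the other hand, \eqref{eq:spin to RC} with $A=\{0,x\}$ gives the connectivity identity ${\bf P}^\emptyset\otimes{\bf P}^\emptyset[0\longleftrightarrow x]=\langle\sigma_0\sigma_x\rangle_{\beta}^2$, where now $\langle\cdot\rangle_\beta$ is the \emph{symmetric} (free, zero-field) state. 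I would then invoke translation invariance and ergodicity of the double-current measure together with a Burton--Keane argument to get uniqueness of the infinite cluster, so that $\{0\longleftrightarrow\infty\}\cap\{x\longleftrightarrow\infty\}\subset\{0\longleftrightarrow x\}$, and an FKG-type inequality for the double current to bound below:
\[
\langle\sigma_0\sigma_x\rangle_{\beta}^2={\bf P}^\emptyset\otimes{\bf P}^\emptyset[0\longleftrightarrow x]\ \ge\ {\bf P}^\emptyset\otimes{\bf P}^\emptyset[0\longleftrightarrow\infty]\cdot{\bf P}^\emptyset\otimes{\bf P}^\emptyset[x\longleftrightarrow\infty]=m^*(\beta)^4 .
\]
Hence $\langle\sigma_0\sigma_x\rangle_{\beta}\ge m^*(\beta)^2$ for every $x$. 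Finally I would feed in reflection positivity: the n.n.f. model on $\mathbb Z^d$ is reflection positive, so the infrared bound applies and, after Fourier inversion against the lattice Green's function, yields $\langle\sigma_0\sigma_x\rangle_{\beta_c}\le c_5/\|x\|^{d-2}$, which tends to $0$ as $\|x\|\to\infty$ precisely because $d\ge3$. Letting $\|x\|\to\infty$ in the displayed inequality forces $m^*(\beta_c)^2\le0$, i.e. $m^*(\beta_c)=0$, proving uniqueness.

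The quantitative bounds then follow for this now-unique state. The upper bound $\langle\sigma_x\sigma_y\rangle_{\beta_c}\le c_5/\|x-y\|^{d-2}$ is, by translation invariance, exactly the infrared bound just used. For the lower bound I would use the sharpness result of the previous subsection, namely $\tilde\beta_c=\beta_c$, which (by continuity of $\beta\mapsto\varphi_\beta(\Lambda_n)$ in finite volume) gives $\varphi_{\beta_c}(\Lambda_n)=\sum_{y\in\partial\Lambda_n}\langle\sigma_0\sigma_y\rangle_{\Lambda_n,\beta_c,0}\ge1$ for every $n$. Since finite-volume correlations are dominated by the infinite-volume ones (Griffiths' inequalities), $\sum_{y\in\partial\Lambda_n}\langle\sigma_0\sigma_y\rangle_{\beta_c}\ge1$; as $|\partial\Lambda_n|\asymp n^{d-1}$, the Messager--Miracle-Sol\'e monotonicity makes the boundary correlations comparable, up to a dimensional constant and reflection symmetries, to the value at distance $\asymp n$, and distributing the mass $\ge1$ over the $\asymp n^{d-1}$ boundary sites yields $\langle\sigma_0\sigma_x\rangle_{\beta_c}\ge c_4/\|x\|^{d-1}$.

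The main obstacle is the random-current input of the second paragraph, not the soft reductions around it. Two points require genuine work. First, the passage from $\langle\sigma_0\rangle^2={\bf P}^\emptyset\otimes{\bf P}^\emptyset[0\longleftrightarrow\mathfrak g]$ to the clean statement $m^*(\beta)^2={\bf P}^\emptyset\otimes{\bf P}^\emptyset[0\longleftrightarrow\infty]$ requires controlling the interchange of the limits $h\searrow0$ and $\Lambda\nearrow\mathbb Z^d$ and the existence of the infinite-volume double-current measure. Second, and more seriously, establishing uniqueness of the infinite cluster and the positive-association inequality for the double current demands the right ergodicity and an insertion-tolerance (``finite energy'') property for currents; this is the delicate step, all the more so because the naive route---bounding the expected cluster size $\sum_x{\bf P}^\emptyset\otimes{\bf P}^\emptyset[0\longleftrightarrow x]=\sum_x\langle\sigma_0\sigma_x\rangle_{\beta_c}^2$---only converges for $d\ge5$ and fails in the physically interesting dimensions $d=3,4$, which is exactly where the genuinely percolation-theoretic argument, rather than a mere summability bound, becomes indispensable.
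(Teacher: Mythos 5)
Your overall skeleton coincides with the paper's: reduce uniqueness to the vanishing of $m^*(\beta_c)$, get the upper bound of \eqref{eq:44} from the infrared bound, the lower bound from $\beta_c=\tilde\beta_c$, and prove $m^*(\beta_c)=0$ through percolation properties of duplicated currents together with uniqueness of the infinite cluster. The gap is in the central step, and it is structural rather than technical: your two identities refer to two \emph{different} doubled-current measures, and the argument only works if they are conflated. The identity $m^*(\beta)^2={\bf P}^\emptyset\otimes{\bf P}^\emptyset[0\leftrightarrow\infty]$ forces ${\bf P}^\emptyset\otimes{\bf P}^\emptyset$ to be the limit (volume first, then $h\searrow0$) of \emph{with-field} current measures; but under that same measure the connectivity identity reads ${\bf P}^\emptyset\otimes{\bf P}^\emptyset[0\leftrightarrow x]=\big(\langle\sigma_0\sigma_x\rangle^+_\beta\big)^2$, not $\big(\langle\sigma_0\sigma_x\rangle^0_\beta\big)^2$. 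Run consistently with the with-field measure, your chain of inequalities yields only $\langle\sigma_0\sigma_x\rangle^+_\beta\ge m^*(\beta)^2$, which is the trivial GKS inequality and cannot be closed against the infrared bound: reflection positivity controls the \emph{free} state, and if one had $m^*(\beta_c)>0$ the plus two-point function would simply fail to decay, with no contradiction. Run instead with the zero-field measure, the identity ${\bf P}^\emptyset\otimes{\bf P}^\emptyset[0\leftrightarrow\infty]=m^*(\beta)^2$ is unjustified: it amounts to continuity of the current measure at $h=0$, which is essentially the statement being proven. This is exactly why the paper (and \cite{AizDumSid15}) works with a \emph{mixed} duplicated current, $\mathbf n_1\sim{\bf P}^\emptyset_{\Lambda,\beta,h}$ with field and $\mathbf n_2\sim{\bf P}^\emptyset_{\Lambda,\beta,0}$ free: the nested-graph version of the switching lemma (legitimate because the two sets of coupling constants agree on $\mathcal E(\Lambda)$) then produces identities mixing plus and free quantities, and non-percolation of this mixed $\widehat{\mathbf n_1+\mathbf n_2}$ is what yields the genuinely nontrivial inequality $\langle\sigma_x\sigma_y\rangle^0_\beta\ge\langle\sigma_x\rangle^+_\beta\langle\sigma_y\rangle^+_\beta$, which the infrared bound can then kill at $\beta_c$.

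A second gap: the ``FKG-type inequality for the double current'' that you invoke to write ${\bf P}[0\leftrightarrow x]\ge{\bf P}[0\leftrightarrow\infty]\,{\bf P}[x\leftrightarrow\infty]$ is not available. The paper states explicitly (Section 5, before Question 2) that FKG is the notable exception among correlation inequalities reproducible by random currents, precisely because currents are not positively associated with respect to inclusion of the sets $\widehat{\mathbf n}$. The actual proof substitutes ergodicity for positive association: by the spatial ergodic theorem the density of vertices joined to the infinite cluster is a.s. a constant $\theta$, so after averaging there exist $x$ arbitrarily far from $0$ with ${\bf P}[0\leftrightarrow\infty,\,x\leftrightarrow\infty]\ge\theta^2-\varepsilon$, and uniqueness of the infinite cluster (Burton--Keane, using the insertion tolerance coming from the local modification $\mathbf n_{xy}\mapsto\mathbf n_{xy}+2$) converts this into a lower bound on ${\bf P}[0\leftrightarrow x]$, with no FKG needed. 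Your remaining steps --- the reduction of uniqueness to $m^*(\beta_c)=0$, the infrared upper bound, the sharpness-based lower bound in \eqref{eq:44}, and the observation that bounding $\sum_x\langle\sigma_0\sigma_x\rangle^2$ only works for $d\ge5$ --- are correct and match the paper.
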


Let us briefly describe the strategy of the proof. The idea is to prove that $\langle\cdot\rangle_{\beta_c}^+$ is equal to $\langle\cdot\rangle_{\beta_c}^0$. Indeed, this implies immediately that $\langle\cdot\rangle_{\beta_c}^+=\langle\cdot\rangle_{\beta_c}^-$ since the Gibbs measure with free boundary conditions is symmetric under global spin flip.

Let us start by saying that $\langle\cdot\rangle^0_{\beta_c}$ is known to satisfy \eqref{eq:44}. Indeed, the left-hand side can be proved using $\beta_c=\tilde\beta_c$ \footnote{Indeed, the definition of $\tilde\beta_c$ implies immediately that $\varphi_{\beta_c}(S)\ge1$ for any finite set $S$. Applying this observation to $S=\Lambda_n$ and using a few classical inequalities implies the result.}. The right-hand side is a consequence of the celebrated infrared bound (see \cite{Bis09} for a review) which states that for $\beta<\beta_c$ and $x,y\in \mathbb Z^d$,
\be\label{eq:IB}\langle\sigma_x\sigma_y\rangle_{\beta}^0\le \tfrac{1}{2\beta}\,G(x,y)\le \frac{c_5}{\|x-y\|^{d-2}},\ee
where $G(x,y)$ is the Green function of the simple random walk on $\mathbb Z^d$ (the backbone expansion gives credence for such a bound, even though the proof does not rely on it). Then, the second inequality of \eqref{eq:44} follows by taking the limit $\beta\nearrow\beta_c$ (this is possible since $\langle\cdot\rangle_{\beta}^0$ converges weakly to $\langle\cdot\rangle_{\beta_c}^0$).

The proof that $\langle\cdot\rangle_{\beta_c}^+$ is equal to $\langle\cdot\rangle_{\beta_c}^0$ is based on the study of the percolation properties of the infinite-volume limit of random duplicated currents. Very roughly, the idea is to show that the random subgraph of $\mathbb Z^d$ obtained by taking the limit as $\Lambda\nearrow \mathbb Z^d$ and then $h\searrow 0$ of the random variable $\widehat{\mathbf n_1+\mathbf n_2}$, where $\mathbf n_1$ and $\mathbf n_2$ are two independent random currents with law $\mathbf P^\emptyset_{\Lambda,\beta,h}$ and $\mathbf P^\emptyset_{\Lambda,\beta,0}$ respectively (note that the first current has a magnetic field and not the second), does not contain an infinite connected component almost surely. 

This question is reminiscent of a classical conjecture in percolation theory, namely that Bernoulli percolation on $\mathbb Z^d$ does not percolate at criticality. It is therefore a priori very difficult to prove such a statement. Nevertheless,
in our context, the fact that \eqref{eq:44} is available for $\langle \cdot\rangle_{\beta_c}^0$ can be combined with ergodic properties of the random subgraph (namely that, when it exists, the infinite connected component is unique almost surely) to prove that the random graph cannot contain an infinite-connected component almost surely. Once again, this proof combines the random-walk {\em and} the percolation perspectives.

\subsection{Truncated four-point function}\label{sec:4} To simplify the notation, we drop the dependency in $\beta$ and $h$. The random currents representation was initially introduced in \cite{Aiz82} to study (among other things) the Ursell four-point function defined for any $x_1,x_2,x_3,x_4$ as
\begin{align*}
&U_4(x_1,x_2,x_3,x_4)\\
&\ \ \ \ =\langle\sigma_{x_1}\sigma_{x_2}\sigma_{x_3}\sigma_{x_4}\rangle-\langle\sigma_{x_1}\sigma_{x_2}\rangle\langle\sigma_{x_3}\sigma_{x_4}\rangle-\langle\sigma_{x_1}\sigma_{x_3}\rangle\langle\sigma_{x_2}\sigma_{x_4}\rangle-\langle\sigma_{x_1}\sigma_{x_4}\rangle\langle\sigma_{x_2}\sigma_{x_3}\rangle.
\end{align*}
Indeed, the switching lemma enables us to rewrite $U_4(x_1,x_2,x_3,x_4)$ as
\be \label{eq:Ursell}
U_4(x_1,x_2,x_3,x_4)=-2\ \langle \sigma_{x_1}\sigma_{x_3}\rangle\langle \sigma_{x_2}\sigma_{x_4}\rangle{\bf P}^{\{x_1,x_3\}}\otimes{\bf P}^{\{x_2,x_4\}}[x_1\lr[\widehat{{\mathbf n}_1+{\mathbf n} _2}]x_2,x_3,x_4],\ee
where ${\bf P}^{\{a,b\}}$ denotes the law introduced in Definition~\ref{def:probability}.
Thus, the connectivity properties of the sum of two independent currents is once again involved in the estimation of truncated spin-spin correlations. Let us mention that random currents were used to show that Ursell $2n$-point functions is positive if and only if $n$ is odd, see \cite{Shl86}.

\medbreak\noindent
{\em 4.3.1. Triviality in dimension $d\ge5$.} 
In this section, we work with $\langle \cdot\rangle_{\beta}^0$ and n.n.f. interactions.  Since Wick's rule is equivalent to the fact that $U_4(x_1,x_2,x_3,x_4)$ vanishes, $U_4(x_1,x_2,x_3,x_4)$ is a measure of how non Gaussian the field $(\sigma_x:x\in\mathbb Z^d)$ is. More precisely, define the {\em renormalized coupling constant} 
\be
g(\beta):=\sum_{x_2,x_3,x_4\in\mathbb Z^d}\frac{U_4(0,x_2,x_3,x_4)}{\chi(\beta)^2\xi(\beta)^d},
\ee
where\footnote{The quantities $\chi(\beta)$ and $\xi(\beta)$ are well defined thanks to Theorem~\ref{thm:1} (plus an additional sub-additivity argument for the definition of $\xi(\beta)$). }
\be
\chi(\beta):=\sum_{x\in\mathbb Z^d}\langle\sigma_0\sigma_x\rangle_{\beta}^0\quad\text{ and }\quad\xi(\beta):=\left(\lim_{n\rightarrow\infty}-\tfrac1n\log\big(\langle\sigma_0\sigma_{ne_1}\rangle_\beta^0\big)\right)^{-1}
\ee
($e_1$ is a unit vector in $\mathbb Z^d$). If $g(\beta)$ tends to 0 as $\beta\nearrow\beta_c$, the field is said to be {\em trivial}. Otherwise, it is said to be {\em non-trivial}. 
Aizenman and Fr\"ohlich proved the following theorem\footnote{This theorem shed a new light on constructive quantum field theory, since it implied that the quantum field constructed from the Ising model (or more generally the $\phi^4_d$ lattice model) is simply the Free Field in dimension 5 and higher. While not fully proved yet, the same should be true in 4d, and therefore this field is the wrong candidate for a non-trivial field in 4d.}. 

\begin{theorem}[\cite{Aiz82,Fro82}]\label{thm:triviality}
For $d\ge5$, $g(\beta)$ tends to 0 as $\beta\nearrow\beta_c$.
\end{theorem}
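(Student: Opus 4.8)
The representation \eqref{eq:Ursell} is the natural starting point: since its right-hand side is $-2\langle\sigma_{x_1}\sigma_{x_3}\rangle\langle\sigma_{x_2}\sigma_{x_4}\rangle$ times a probability, it already gives $U_4\le0$ and hence $g(\beta)\le0$. The entire task is therefore an \emph{upper} bound on $-g(\beta)$, equivalently on $\sum_{x_2,x_3,x_4}\big(-U_4(0,x_2,x_3,x_4)\big)$. Throughout I would work at $\beta<\beta_c$, where $\chi(\beta)$ and $\xi(\beta)$ are finite by Theorem~\ref{thm:1}, and let $\beta\nearrow\beta_c$ only at the end.

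The key step, and the one I expect to be the main obstacle, is the Aizenman--Fr\"ohlich tree-diagram bound
\be\nonumber
0\le -U_4(x_1,x_2,x_3,x_4)\le 2\sum_{y\in\mathbb Z^d}\langle\sigma_{x_1}\sigma_y\rangle\langle\sigma_{x_2}\sigma_y\rangle\langle\sigma_{x_3}\sigma_y\rangle\langle\sigma_{x_4}\sigma_y\rangle.
\ee
In the duplicated current $\mathbf P^{\{x_1,x_3\}}\otimes\mathbf P^{\{x_2,x_4\}}$ appearing in \eqref{eq:Ursell}, the sources force $x_1,x_3$ to lie in one cluster of $\widehat{\mathbf n_1+\mathbf n_2}$ and $x_2,x_4$ in another; the event in \eqref{eq:Ursell} asks that these two clusters coincide, and hence share some vertex $y$. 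I would first union-bound over the possible common vertices $y$, and then disentangle the four resulting connections $x_i$-to-$y$ into a product of two-point functions by a BK/tree-graph-type inequality, itself obtained through a further application of the switching lemma combined with conditioning on the cluster of one of the sources. Turning the single joint connection event into four \emph{independent} two-point factors is precisely where the combinatorics of currents must be pushed, and it is the technical heart of the argument.

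Granting the tree bound, the remaining steps are essentially bookkeeping. Summing over $x_2,x_3,x_4\in\mathbb Z^d$ and using translation invariance together with $\sum_x\langle\sigma_0\sigma_x\rangle_\beta^0=\chi(\beta)$, each of the three free sums and the remaining sum over $y$ contributes one factor $\chi(\beta)$, so that
\be\nonumber
\sum_{x_2,x_3,x_4}\big(-U_4(0,x_2,x_3,x_4)\big)\le 2\,\chi(\beta)^4,
\ee
and dividing by the normalization gives $0\le -g(\beta)\le 2\,\chi(\beta)^2/\xi(\beta)^d$.

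It then remains to show $\chi(\beta)^2/\xi(\beta)^d\to0$ for $d\ge5$, and this is where the dimension enters. I would invoke the infrared bound \eqref{eq:IB} in Fourier space: the transform $\hat G_\beta(k):=\sum_{x}e^{ik\cdot x}\langle\sigma_0\sigma_x\rangle_\beta^0$ satisfies $\hat G_\beta(k)\le C/E(k)$, where $E(k):=\sum_{i=1}^d(1-\cos k_i)$ behaves like $\tfrac12|k|^2$ near the origin. Evaluating this at $|k|$ of order $1/\xi(\beta)$, and using that $\hat G_\beta$ stays comparable to its value $\chi(\beta)$ at the origin down to that scale (a consequence of the exponential decay of the two-point function at rate $1/\xi(\beta)$), yields $\chi(\beta)\le C'\,\xi(\beta)^2$ uniformly as $\beta\nearrow\beta_c$. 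Hence $-g(\beta)\le C''\,\xi(\beta)^{4-d}\to0$, since $\xi(\beta)\to\infty$ at criticality and $4-d<0$. Equivalently, the closing estimate can be packaged through the finiteness of the bubble diagram $\sum_x\langle\sigma_0\sigma_x\rangle^2$, which stays bounded up to $\beta_c$ by \eqref{eq:IB} exactly when $2(d-2)>d$, i.e. when $d\ge5$.
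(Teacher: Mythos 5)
Your proposal is correct and is essentially the paper's own (sketched) route, namely Aizenman's argument: start from the switching-lemma representation \eqref{eq:Ursell}, control the probability that the two source clusters meet via the tree-diagram bound (union bound over the meeting point plus a further switching/conditioning argument, exactly the ``technical heart'' you identify), and close with the infrared bound \eqref{eq:IB}, which gives $\chi(\beta)^2/\xi(\beta)^d\le C\,\xi(\beta)^{4-d}\to 0$ precisely when $d\ge5$. The one soft spot is the claim $\chi(\beta)\le C'\xi(\beta)^2$: the simple Fourier/exponential-decay argument you outline yields this only up to logarithmic corrections (or with the second-moment correlation length in place of $\xi$), but this is harmless here since $4-d\le -1$.
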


Let us briefly discuss Aizenman's proof\,\footnote{Fr\"ohlich used the Symanzik-Brydges-Fr\"ohlich-Spencer representation to prove his theorem, with an integration by part formula replacing the switching lemma.}, which illustrates again the power of combining random-walk and percolation interpretations.
Theorem~\ref{thm:triviality} follows from the fact that the probability on the right-hand side of \eqref{eq:Ursell} tends to zero when $x_1,\dots,x_4$ are far away from each others (formulated differently, Wick's rule is asymptotically true in dimension five and higher). To prove this statement, Aizenman used the intuition coming from random-walks. In dimension five and higher, the connected components of $x_1$ and $x_2$ in $\widehat{{\mathbf n}_1+{\mathbf n} _2}$ should not be very different from the backbones of $\mathbf n_1$ and $\mathbf n_2$ respectively. Furthermore, since these backbones look like walks, they are expected to behave like simple random-walks in dimension four and higher. Therefore, the event under consideration in \eqref{eq:Ursell} should intuitively have a probability comparable to the probability that two independent simple random-walks from $x_1$ to $x_3$ and from $x_2$ to $x_4$ intersect each others. A short computation shows that they do it with probability tending to zero as soon as $d\ge4$. Of course, the connected components of $x_1$ and $x_2$ are not completely equivalent to two backbones, which themselves are not completely equivalent to two independent simple random-walks. Therefore, one needs some additional work to complete the proof. Let us simply say that the main tool in the proof is the infrared bound \eqref{eq:IB} discussed in the previous section. 

\medbreak\noindent
{\em 4.3.2. The two-dimensional case.} Equation \eqref{eq:Ursell} has a beautiful interpretation when working with spin-spin correlations on the boundary of a two-dimensional ``simply connected'' graph. 

\begin{corollary}\label{cor:2d}
Let $\beta>0$ and $\Lambda$ be a connected subgraph of $\mathbb Z^2$ with connected complement. In the formula below, $\langle\cdot\rangle$ denotes $\langle \cdot\rangle_{\Lambda,\beta,0}$. Let $x_1$, $x_2$, $x_3$ and $x_4$ be four vertices on the boundary $\partial\Lambda$ of $\Lambda$ found in counter-clockwise order (when going around the boundary), then 
\begin{align*}&\langle \sigma_{x_1}\sigma_{x_2}\sigma_{x_3}\sigma_{x_4}\rangle=\langle \sigma_{x_1}\sigma_{x_2}\rangle\langle \sigma_{x_3}\sigma_{x_4}\rangle-\langle \sigma_{x_1}\sigma_{x_3}\rangle\langle \sigma_{x_2}\sigma_{x_4}\rangle+\langle \sigma_{x_1}\sigma_{x_4}\rangle\langle \sigma_{x_2}\sigma_{x_3}\rangle.\end{align*}
\end{corollary}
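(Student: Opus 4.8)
The plan is to recognize that this identity is equivalent to the statement that the connection probability in \eqref{eq:Ursell} equals one, and then to prove the latter by a purely deterministic planar topology argument. First I would rearrange algebraically: substituting the definition of $U_4(x_1,x_2,x_3,x_4)$ into the claimed formula and cancelling the common terms $\langle\sigma_{x_1}\sigma_{x_2}\rangle\langle\sigma_{x_3}\sigma_{x_4}\rangle$ and $\langle\sigma_{x_1}\sigma_{x_4}\rangle\langle\sigma_{x_2}\sigma_{x_3}\rangle$, one sees that the corollary is equivalent to
\[
U_4(x_1,x_2,x_3,x_4)=-2\,\langle\sigma_{x_1}\sigma_{x_3}\rangle\langle\sigma_{x_2}\sigma_{x_4}\rangle.
\]
Since $\beta>0$ on a connected $\Lambda$ forces $\langle\sigma_{x_1}\sigma_{x_3}\rangle\langle\sigma_{x_2}\sigma_{x_4}\rangle>0$, comparison with \eqref{eq:Ursell} shows that this is in turn equivalent to
\[
{\bf P}^{\{x_1,x_3\}}\otimes{\bf P}^{\{x_2,x_4\}}\big[x_1\lr[\widehat{{\mathbf n}_1+{\mathbf n}_2}]x_2,x_3,x_4\big]=1.
\]

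Next I would show that the event inside this probability holds for \emph{every} pair $({\mathbf n}_1,{\mathbf n}_2)$ in the support of the product measure, so that the probability is trivially $1$. The measure ${\bf P}^{\{x_1,x_3\}}$ is supported on currents with $\partial{\mathbf n}_1=\{x_1,x_3\}$; since each connected component of $\widehat{{\mathbf n}_1}$ contains an even number of sources (the sum of $X({\mathbf n}_1,\cdot)$ over a component is twice the number of current units it carries), the two sources $x_1$ and $x_3$ must lie in the same component, and hence there is a path $\gamma_1\subset\widehat{{\mathbf n}_1}$ joining them. Symmetrically, ${\mathbf n}_2$ yields a path $\gamma_2\subset\widehat{{\mathbf n}_2}$ from $x_2$ to $x_4$. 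Because $\widehat{{\mathbf n}_1+{\mathbf n}_2}=\widehat{{\mathbf n}_1}\cup\widehat{{\mathbf n}_2}$, both paths belong to the trace of the sum. It therefore suffices to prove that $\gamma_1$ and $\gamma_2$ share a vertex $v$: such a $v$ satisfies $v\leftrightarrow x_1,x_3$ and $v\leftrightarrow x_2,x_4$, whence $x_1\leftrightarrow x_2,x_3,x_4$, which is exactly the event we want.

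The crossing of $\gamma_1$ and $\gamma_2$ is the heart of the argument and the step I expect to be the main obstacle. Here I would use that $\Lambda$, being a connected subgraph of $\mathbb Z^2$ with connected complement, is simply connected, so its boundary is a simple cycle carrying $x_1,x_2,x_3,x_4$ in counter-clockwise order; thus $x_1,x_3$ separate $x_2,x_4$ along the boundary. By a discrete Jordan-curve argument, the path $\gamma_1$ (with both endpoints on the boundary) cuts $\Lambda$ into two pieces containing the two complementary boundary arcs, so that $x_2$ and $x_4$ lie on opposite sides; any path $\gamma_2$ connecting them must then meet $\gamma_1$. The feature special to $\mathbb Z^2$ that makes the conclusion clean is that two distinct nearest-neighbor edges intersect only at a common endpoint, so vertex-disjoint lattice paths are disjoint as planar curves; consequently \emph{must cross} upgrades to \emph{must share a vertex}. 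The delicate points I would have to handle are making the discrete separation statement rigorous (for instance by closing $\gamma_1$ into a Jordan curve through the connected exterior of $\Lambda$) and dealing with the degenerate configurations in which $\gamma_1$ or $\gamma_2$ happens to pass through one of the other three marked vertices; once these are dispatched the shared vertex exists deterministically, the connection probability equals one, and the corollary follows.
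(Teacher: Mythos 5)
Your proposal is correct and follows essentially the same route as the paper's own proof: reduce the identity algebraically to the statement that the connection probability on the right-hand side of \eqref{eq:Ursell} equals one, then verify this deterministically by extracting a path from $x_1$ to $x_3$ in $\widehat{\mathbf n}_1$ and a path from $x_2$ to $x_4$ in $\widehat{\mathbf n}_2$ and invoking planarity of $\Lambda$ together with the cyclic order of the four boundary points to force the two paths to share a vertex. Your write-up simply makes explicit some details the paper leaves implicit (the even number of sources per connected component, the Jordan-curve separation argument, and the fact that on $\mathbb Z^2$ crossing paths must meet at a vertex).
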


The formula on the right differs from Wick's rule by a minus sign. The proof follows from the fact that the probability on the right-hand side of \eqref{eq:Ursell} is equal to 1. Indeed, the trace $\widehat{\mathbf n}_1$ of the current $\mathbf n_1$ with sources at $x_1$ and $x_3$ contains a path from $x_1$ to $x_3$, which must intersect the path from $x_2$ to $x_4$ present in $\widehat{\mathbf n}_2$ (since $\mathbf n_2$ has sources at $x_2$ and $x_4$). Therefore, $x_1$, $x_2$, $x_3$ and $x_4$ must necessarily be all connected together in $\widehat{{\mathbf n}_1+{\mathbf n} _2}$. 

The previous result extends to n.n.f. Ising models on any planar graph, and even to an arbitrary number of vertices $x_1,\dots,x_{2n}$. In such case, we obtain a {\em fermionic Wick rule} for the $2n$-point function
$$\langle \sigma_{x_1}\cdots\,\sigma_{x_{2n}}\rangle=\sum_{\pi\in\Pi_n}\varepsilon(\pi)\langle \sigma_{x_{\pi(1)}}\sigma_{x_{\pi(2)}}\rangle\dots \langle \sigma_{x_{\pi(2n-1)}}\sigma_{x_{\pi(2n)}}\rangle,$$
where $\Pi_n$ is the set of pairings of $\{1,\dots,2n\}$, i.e.~the set of permutations $\pi$ such that
$\pi(2j-1)<\pi(2j)$ for any $j\in\{1,\dots,n\}$ and $\pi(2j-1)<\pi(2j+1)$ for any $j\in\{1,\dots,n-1\}$. Above, $\varepsilon(\pi)$ is the signature of $\pi$, which can be seen as $-1$ to the power the number of intersections of the graph obtained by drawing simple arcs in $\Lambda$ between $\pi(2j-1)$ and $\pi(2j)$ for any $j\in\{1,\dots,n\}$. 

The fermionic Wick rule emerges naturally from formulae involving pfaffians (or equivalently ``Gaussian'' Grassmann integrals). Formulae expressing the partition function of the Ising model in terms of pfaffians go back to \cite{Kas63,HurGre60}. Since then, most solutions of the 2D Ising model naturally led to pfaffians formulae. Let us mention 
a direct mapping between the n.n.f. 2D Ising model and fermionic systems discovered in \cite{SchMatLie64}. In this paper, Schultz, Mattis and Lieb proved that the transfer matrix of the Ising model can be rewritten as the exponential of a quantum Hamiltonian describing a 1D chain of non-interacting fermions. 

\section{Open questions on random currents}

We now list some open questions directly related to random currents.

First, 
Equation~\eqref{eq:spin to RC} shows that the long-range order in the Ising model gets rephrased into long-range connectivity in the sum of two sourceless currents $\mathbf n_1$ and $\mathbf n_2$. One can easily check that the infinite-volume version of $\widehat{\mathbf n_1+\mathbf n_2}$ has an infinite connected component almost surely if and only if $\beta>\beta_c$. 
\medbreak
\noindent{\bf Question 1.} Does the infinite-volume version of {\em one} sourceless current $\widehat{\mathbf n}$ contain an infinite connected component almost surely at $\beta>\beta_c$?
\medbreak
Many of the correlations inequalities available for the Ising model can be obtained via random currents with the notable exception of the famous FKG inequality. One of the reasons for this failure is that random currents do not seem to behave well regarding the natural ordering on subsets of $\mathcal E$ given by the inclusion. On the contrary, the random-cluster model is naturally ordered, in the sense that there exists a coupling between $E\sim \phi_{\Lambda,\beta,h}$ and $E'\sim\phi_{\Lambda,\beta',h}$ such that $E\subset E'$ almost surely as soon as $\beta\le \beta'$. One may convince oneself that such a coupling does not exist for random currents. Nevertheless, some properties of random currents should still be increasing in $\beta$ (e.g.~${\bf P}^\emptyset_\beta\otimes{\bf P}^\emptyset_\beta[x\lr[\widehat{{\mathbf n}_1+{\mathbf n} _2}]y]=\langle\sigma_x\sigma_y\rangle_\beta^2$). As a toy example, we propose the following question.
\medbreak
\noindent{\bf Question 2.} Is $\beta\longmapsto{\bf P}^\emptyset_\beta\otimes{\bf P}^\emptyset_\beta[A\lr[\widehat{{\mathbf n}_1+{\mathbf n} _2}]B]$ increasing for any $A,B\subset \Lambda$?
\medbreak
Remark~\ref{rmk:comparison} relates the high-temperature expansion and the random-cluster model to random currents. 
 On the square lattice, both the high-temperature expansion (which corresponds to the low-temperature expansion on the dual lattice by Kramers-Wannier duality) and the interfaces of the random-cluster model were proved to be conformally invariant \cite{CheDumHon14} (the preprints \cite{BenHon16,KemSmi15} prove convergence to CLE(3) and CLE(16/3) respectively). See also \cite{DumSmi12} for a review referencing the previous contributions. \medbreak\noindent
{\bf Question~3.}~Prove~that~the~scaling~limit~of~2D~sourceless~random~currents~is~CLE(3).
\medbreak
Random currents have been geared to study truncated correlations at $h=0$ (for $h\ne0$, several arguments show that they decay exponentially fast at any $\beta>0$). As seen above, they enabled to prove that truncated spin-spin correlations decay exponentially fast for $\beta<\beta_c$  and algebraically fast for $\beta=\beta_c$ (truncating is not necessary in these cases). In 2D, Kramers-Wannier duality together with the exponential decay for $\beta<\beta_c$ imply that truncated two-point functions decay exponentially fast for $\beta>\beta_c$. The only case left is the case $\beta>\beta_c$ and $d\ge3$.
\medbreak
\noindent{\bf Question 4.} Prove that for $d\ge3$ and $\beta>\beta_c$, there exists $c_6=c_6(\beta)>0$ such that for any $x,y\in\mathbb Z^d$,
$
\langle \sigma_x\sigma_y\rangle_\beta^+-\langle \sigma_x\rangle_\beta^+\langle\sigma_y\rangle^+_\beta\le \exp[-c_6 \|x-y\|].
$
\medbreak
The result of Corollary~\ref{cor:2d} was known for a long time. Nevertheless, the strategy of the proof using random currents is of great interest since it generalizes to finite-range interactions \cite{AizDumWar16}.\ We believe that random currents can improve the understanding of universality for the 2d Ising model, in particular for arbitrary finite-range interactions\footnote{For instance, in the non-planar setting random currents were analyzed using lace expansion techniques to study general finite-range ferromagnetic Ising models in large dimension, see  \cite{Sak07}.} (some universality results were already obtained for finite-range interactions that are perturbations of the n.n.f. model in \cite{GiuGreMas12}). Nonetheless, studying the geometric properties of random currents at criticality is a very difficult challenge, as illustrated by the fact that we are currently unable to prove a RSW type result similar to the random-cluster version obtained in \cite{DumHonNol11}. 
\medbreak
\noindent{\bf Question 5.} Use random currents to study critical finite-range Ising models defined on $\mathbb Z^2$.
\medbreak
The question of the triviality of Ising on $\mathbb Z^4$ is still open. This question can be attacked with random currents. Similarly, it would be interesting to prove that the renormalized coupling constant does not tend to 0 on $\mathbb Z^3$.
\medbreak
\noindent{\bf Question 6.} Prove that the renormalized coupling constant converges (respectively does not converge) to 0 in dimension 4 (respectively 3).
\medbreak
Despite the fact that triviality is not proved in dimension 4, critical exponents are known to take their mean-field bound in dimension 4 (see \cite{AizFer86,AizFer88}). It would be interesting to prove that this is not the case in dimension 3. 
\medbreak
\noindent{\bf Question 7.} Consider the n.n.f. Ising model on $\mathbb Z^3$. Prove that there exists $\varepsilon,c_7,c_8>0$ such that for any $x,y\in\mathbb Z^d$,
\be\frac{c_7}{\|x-y\|^{2-\varepsilon}}\le \langle\sigma_x\sigma_y\rangle_{\beta_c}\le \frac{c_8}{\|x-y\|^{1+\varepsilon}}.\ee
\medbreak
Let us finish by a question for percolation aficionados (answering this question would provide a direct proof that 
$\langle\cdot\rangle_\beta^0=\tfrac12\langle\cdot\rangle_\beta^++\tfrac12\langle\cdot\rangle_\beta^-$ for any $\beta\ge0$, see \cite{AizDumSid15}). 
\medbreak\noindent
{\bf Question 8.} For\,$\beta>0$,\,show\,that\,the\,infinite\,connected component (if it exists) of the percolation model (built from infinite-volume duplicated currents) mentioned in the antepenultimate paragraph of Section~\ref{sec:3} has one end almost surely.

\paragraph{Acknowledgments} The author was funded by the NCCR SwissMap and the Swiss FNS. The author thanks D. Chelkak, D. Cimasoni, M. Harel, A. Raoufi, V. Tassion and Y. Velenik for their comments on the manuscript.

$ $\\
\textsc{hugo.duminil@unige.ch\\
D\'epartement de Math\'ematiques, Universit\'e de Gen\`eve,\\
2-4 rue du Li\`evre, 1211 Gen\`eve, Switzerland}

\end{document}